\documentclass[submission,copyright,creativecommons]{eptcs}

\usepackage{iftex}

\ifpdf
  \usepackage{underscore}         
  \usepackage[T1]{fontenc}        
\else\usepackage{breakurl}           
\fi

\usepackage[utf8]{inputenc}
\usepackage{amsmath,amssymb}
\usepackage{proof}
\usepackage{mathtools}
\usepackage{xfrac}
\usepackage{physics}
\usepackage{tensor}
\usepackage{stmaryrd}
\usepackage{thm-restate}
\usepackage{stmaryrd}
\usepackage{wasysym}
\usepackage{booktabs}
\usepackage{tikz}
\usetikzlibrary{arrows.meta,calc,positioning,decorations.pathmorphing,fit,shapes.misc,patterns,patterns.meta}

\usepackage{apxproof}

\usepackage{graphicx}
\usepackage{caption}
\usepackage{subcaption}
\graphicspath{ {./img/} }
\usetikzlibrary{cd}
\usetikzlibrary{decorations.pathmorphing,positioning,calc}
\tikzset{arrow/.style={-stealth,thick}}
\tikzset{parrow/.style={-stealth,decoration={snake,amplitude=1pt,segment length=6pt,post length=2pt},decorate,thick}}
\tikzset{conf/.style={}}
\tikzset{psplit/.style={draw, circle, inner sep=2pt}}
\tikzset{csplit/.style={fill, circle, inner sep=2pt}}
\tikzset{weight/.style={font=\small,midway}}
\usepackage{centernot}
\usepackage{hyperref}
\usepackage{booktabs}
\usepackage{microtype}
\interdisplaylinepenalty=2500

\usepackage{mathtools}
\usepackage{amsthm}
\usepackage{thmtools}
\usepackage{thm-restate}
\usepackage{apxproof}
\usepackage{tikz}
\usepackage{proof}
\usepackage{physics}
\usepackage{stmaryrd}
\usepackage{xfrac}
\usepackage{tensor}
\usepackage{soul}

\newcommand{\qtop}[1][\rho]{\downarrow_{#1}\!\!}
\newcommand{\psum}[1]{\tensor[_{#1}]{\oplus}{}}

\newcommand{\hilbert}{\mathcal{H}}
\newcommand{\hilb}{\hilbert}

\newcommand{\singleton}{\overline}
\newcommand{\distelem}[2]{#1 \cdot #2}

\newcommand{\cfield}{\mathbb{C}}
\DeclareMathOperator{\supp}{supp}

\newcommand{\kp}{\ket{\psi}}
\newcommand{\kf}{\ket{\phi}}
\newcommand{\kz}{\ket{0}}
\newcommand{\ko}{\ket{1}}
\newcommand{\kpl}{\ket{+}}
\newcommand{\km}{\ket{-}}

\newcommand{\dm}[1][]{{D \hspace{-0.03cm} M}_{#1}}

\newcommand{\ef}[1][]{\mathbb{Q}_{#1}}

\newcommand{\blank}{{\,\cdot\,}}
\newcommand{\I}{\mathbb{I}}

\newcommand{\littletaller}{\mathchoice{\vphantom{\big|}}{}{}{}}
\newcommand\restr[2]{{
\left.\kern-\nulldelimiterspace 
#1 
\littletaller 
\right|_{#2} 
}}

\newcommand{\E}{\mathcal{E}}

\newcommand{\iso}{\cong}
\newcommand{\ealg}[1][E]{\mathbb{#1}}
\newcommand{\emon}[1][\ealg]{D_{#1}}
\newcommand{\qmon}[1][C]{Q_{#1}}
\newcommand{\pcmcat}{\mathbf{PCM}}
\newcommand{\eacat}{\mathbf{EA}}
\newcommand{\setcat}{\mathbf{Set}}

\newcommand{\powset}{\mathcal{P}}
\newcommand{\peval}[2]{{#1}\downharpoonleft_{#2}}
\newcommand{\coalgcat}[1][F]{\mathbf{Set}_{#1}}
\newcommand{\PL}[1]{\mathcal{P}({#1})^L}

\newtheorem{desiderata}{Desiderata}

\newtheorem{theorem}{Theorem}
\newtheorem{corollary}{Corollary}
\newtheorem{lemma}{Lemma}

\newtheorem{example}{Example}

\newtheorem{definition}{Definition}

\title{A Coalgebraic Model of Quantum Bisimulation}

\author{Lorenzo Ceragioli
\institute{IMT School for Advanced Studies, Lucca, Italy}
\email{lorenzo.ceragioli@imtlucca.it}
\and 
Elena Di Lavore
\institute{University of Pisa, Pisa, Italy}
\email{elena.dilavore@di.unipi.it}
\and
Giuseppe Lomurno
\institute{University of Pisa, Pisa, Italy}
\email{giuseppe.lomurno@phd.unipi.it}
\and
Gabriele Tedeschi
\institute{University of Pisa, Pisa, Italy}
\email{gabriele.tedeschi@phd.unipi.it}
}

\newcommand{\titlerunning}{A Coalgebraic Model of Quantum Bisimulation}
\newcommand{\authorrunning}{L. Ceragioli, E. Di Lavore, G. Lomurno, G. Tedeschi}

\hypersetup{
  bookmarksnumbered,
  pdftitle    = {\titlerunning},
  pdfauthor   = {\authorrunning},
  pdfsubject  = {EPTCS},               
  pdfkeywords = {coalgebra, quantum computing, quantum effects, bisimilarity, graded monads, effect algebra} 
}

\begin{document}
\maketitle

\begin{abstract}
	Recent works have shown that defining a behavioural equivalence that matches the observational properties of a quantum-capable, concurrent, non-deterministic system is a  surprisingly difficult task.
	We explore coalgebras over distributions taking weights from a generic effect algebra, which subsumes probabilities and \emph{quantum effects}, a physical formalism that represents the probabilistic behaviour of an open quantum system.
	To abide by the properties of quantum theory, we introduce monads graded on a partial commutative monoid, intuitively allowing composition of two processes only if they use different quantum resources, as prescribed by the \emph{no-cloning theorem}.
	We investigate the relation between an open quantum system and its probabilistic counterparts obtained when instantiating the input with a specific quantum state.
	We consider Aczel-Mendler and kernel bisimilarities, advocating for the latter as it characterizes quantum systems that exhibit the same probabilistic behaviour for all input states.
	Finally, we propose operators on quantum effect labelled transition systems, paving the way for a process calculi semantics that is parametric over the quantum input.
\end{abstract}

\section{Introduction}

The recent development of quantum technologies calls for grounded methods for modelling and verifying computing systems that
exploit quantum phenomena like superposition and entanglement.
In particular, concurrent processes are of main interest, like communication protocols~\cite{nurhadiQuantumKeyDistribution2018, bennettQuantumCryptographyPublic2014, longQuantumSecureDirect2007, gaoQuantumPrivateQuery2019} and distributed implementations of algorithms via the \emph{Quantum Internet}~\cite{caleffiQuantumInternetCommunication2018,zhangFutureQuantumCommunications2022}.
Such systems are essentially composed by a set of non-deterministic agents that run in parallels, can communicate, and have access to a quantum memory over which they can perform operations.
The simplest quantum memory is the qubit, which takes values from a complex vector space: it can be in one of two standard basis states $\kz$ and $\ko$, as well as in any linear combination of them.
Agents can observe the state of a qubit only through a measurement on a chosen basis, a special kind of quantum operation
that destroys the qubit and returns a probabilistic classical outcome, which depends on the qubit state.
As an additional constraint, 
the \emph{no-cloning theorem} forbids copying qubits, which thus cannot be shared by multiple agents~\cite{nielsenQuantumComputationQuantum2010}.
Despite the considerable effort~\cite{lalireRelationsQuantumProcesses2006,davidsonFormalVerificationTechniques2012,fengAutomaticVerificationQuantum2015a,  dengBisimulationsProbabilisticQuantum2018,ceragioliQuantumBisimilarityBarbs2024}, there is still no clear notion of behavioural equivalence for such systems, and most of the proposed ones lack a decision procedure.

In~\cite{lics}, the authors have proposed a new semantic model for quantum protocols, namely quantum effect labelled transition systems (qLTSs).
These use \emph{quantum effects} as weights, in the same way probabilities are used in a probabilistic labelled transition system (pLTS).
Coming from quantum measurement theory, quantum effects represent the observable probabilistic properties that a quantum system   expresses when initialized with a given input quantum state.
This indeed allows modelling the agents in quantum protocols: they are essentially non-deterministic procedures acting on an initial quantum state, which is known only when a specific implementation is used in a concrete setting.
Roughly, qLTSs model processes that are parametric over their quantum input.

In this paper we build on this concept, characterizing quantum effects and probabilities as effect algebras.
We investigate their (sub)distributions, and we give effect labelled transition systems ($\ealg$LTSs), a uniform coalgebraic framework encompassing non-deterministic, probabilistic and quantum concurrent systems.
The coalgebraic language is well suited to treat dynamical systems in their essential features, and allows us to extend properties and constructs of probabilistic systems to quantum ones.
We introduce monads graded on a partial commutative monoid (PCM),
allowing us to grade effect distributions over their quantum resources, 
the copy of which is forbidden by the no-cloning theorem.
Transition systems graded over this PCM intuitively represent quantum computations that consume some quantum resources: only computations using disjoint resources
can be composed.
Besides, thanks to our peculiar grading, we define a commutative Kronecker product of effects, and thus a commutative multiplication of effect distributions, generalizing the joint distribution operator of the probabilistic case \cite{bartelsHierarchyProbabilisticSystem2004}.

We investigate kernel and Aczel-Mendler (AM) bisimilarities for $\ealg$LTSs, and relate them by generalizing previous results.
By applying our findings to the quantum setting, we prove that each quantum state $\rho$ defines a functor from qLTSs to pLTSs that “instantiates” a quantum process to the probabilistic behaviour it exhibits when the measured quantum state is $\rho$.
We study which equivalences are preserved and/or reflected by these functors, giving us a categorical way to pinpoint the correct notion of equivalence for quantum systems, i.e. the one that equates all and only the processes that exhibit the same observable probabilistic behaviour.
Both AM and kernel bisimilarities correctly relate indistinguishable quantum processes only.
However, only the latter is complete and relates all the indistinguishable processes, given that the weights of the qLTS are taken from a finite effect algebra of quantum effects.

Finally, we investigate operators over $\ealg$LTSs, paving the way for an $\ealg$LTS semantics of quantum process calculi.
We provide a generalized, compositional parallel operator which can model different notions of synchronization (CCS, CSP, ACT) and different kinds of “weights” (nondeterministic, probabilistic or quantum systems).
In addition, we introduce a purely quantum operator of “partial evaluation” of qLTSs that instantiates the value of some of the qubits.
These operators are defined as functors between $\ealg$LTSs, thus they preserve bisimilarity.
While similar operators in the literature typically act on the final coalgebra, treating them as functors allows for “multi sorted” operators: partial evaluation reduces the resources used by
the computation;
parallel composition joins them by increasing the number of qubits (only if they are compatible, i.e. of different quantum systems).
A qLTS semantics for quantum processes would reduce the problem of deciding probabilistic bisimilarity of two protocols for all possible input quantum states to just checking kernel bisimilarity of two given qLTSs, which can be solved via standard techniques~\cite{jacobsFastCoalgebraicBisimilarity2023a}.
%

\paragraph{Related Works.}
Usually, works from the process algebra literature~\cite{lalireRelationsQuantumProcesses2006,dengOpenBisimulationQuantum2012,davidsonFormalVerificationTechniques2012,dengBisimulationsProbabilisticQuantum2018,ceragioliQuantumBisimilarityBarbs2024,aplas} define the semantics of quantum processes via pLTSs, and strive to adapt probabilistic bisimilarity to capture the peculiar observable properties of quantum values.
The defined pLTSs are made of configurations, i.e.\ pairs of quantum values and syntactic processes, impeding algorithmic verification of processes when the quantum input is not given (the only symbolic approach~\cite{fengSymbolicBisimulationQuantum2014} has been proved too strict~\cite{ceragioliQuantumBisimilarityBarbs2024}).
We instead introduce a purely quantum transition system: we do not represent directly quantum values but only their observable probabilistic features in the form of effects.

This work builds on~\cite{lics}:
 we use the language of distribution monads, their coalgebras and their coalgebraic bisimilarity to generalize the concepts of quantum distribution, quantum transition systems and Larsen-Skou bisimilarity presented there.

Noticeably, our effect distribution monad subsumes both the probability distribution monad, the quantum effect distributions of~\cite{lics} and the quantum monad of~\cite{abramskyQuantumMonadRelational2017}, which is based on projectors, a subset of quantum effects.
Our graded monad definition extends the non-graded effect monad presented in~\cite{jacobsProbabilitiesDistributionMonads2011}, which takes weights in effect monoids, and not in the larger class of graded effect monoids.

Some similar notion to our $\mathbb{E}$LTSs has already been proposed: effect-valued Quantum Markov Chain of~\cite{gudderQuantumMarkovChains2008} uses sequential effect composition instead of tensor product; QLTS of~\cite{ogawaCoalgebraicApproachEquivalences2014} uses superoperators instead of effects, so to capture also non-destructive measurements.
The author of~\cite{ogawaCoalgebraicApproachEquivalences2014} introduces also two different notions of bisimilarity, that we recover in our minimal, effect-based setting as AM and kernel bisimilarity.
However, none of these works feature nondeterminism, which are needed for modelling protocols, nor they investigate correctness by comparing their findings with the probabilistic behaviour of quantum systems.
A final novelty of our work is investigating operators over labelled transition systems, suitable for modelling quantum protocols, e.g. via process calculi.

%

\paragraph{Synopsis.}
In~\autoref{sec:bg} we give some background about effect algebras, quantum computing and quantum effects. 
In~\autoref{sec:efDist} we present effect distributions and their monads.
In~\autoref{sec:elts} we investigate coalgebras on effect distributions 
and their bisimilarity, and we discuss $\ealg$LTSs and their operators in~\autoref{sec:operators}.  
Finally, we conclude in~\autoref{sec:conclude}, and we give the proof sketches in the appendix.

\section{Background}\label{sec:bg}

We recall the definitions of partial commutative monoid and effect algebra.
Then, we give some background on quantum computing and quantum effects.
We refer to~\cite{nielsenQuantumComputationQuantum2010} for further reading on quantum computing, to~\cite{heinosaariMathematicalLanguageQuantum2011} for the 
details of quantum effects, and to~\cite{jacobsConvexityDualityEffects2010} for their algebraic treatment.

\subsection{Partial Commutative Monoid and Effect Algebra}
Partial commutative monoids obey the properties of commutative monoids, but $+$ is not always defined.
\begin{definition}
	A partial commutative monoid (PCM) is a tuple $\langle M, 0, +\rangle$ (often referred as $M$) with $0 \in M$ and $+ : M \times M \to M$ a partial binary operation on $M$ such that for all $a, b, c \in M$ the following hold:
	\begin{itemize}
		\item (Commutativity) $a \perp b$ implies $b \perp a$ and $a + b = b + a$;
		\item (Associativity) $b \perp c$ and $a \perp (b + c)$ implies $a \perp b$ and $(a + b) \perp c$ and also $(a + b) + c = a + (b + c)$;
		\item (Zero) $0 \perp a$ and $0 + a = a$.
	\end{itemize}
	Here, $a \perp b$ means that $a$ and $b$ are orthogonal, i.e.\ $a + b$ is defined.
	A PCM homomorphism is a function  $f : M \to N$ on the underlying carrier sets such that $f(0) = 0$, and $a \perp b$ implies $f(a) \perp f(b)$ and $f(a + b) = f(a) + f(b)$.
	PCMs and their homomorphisms form the category $\pcmcat$.
\end{definition}
Every PCM induces a preorder on the carrier set $M$ where $a \preceq b$ if and only if $\exists c \in M \ldotp a + c = b$.

Effect algebras are a special kind of PCMs for which an inverse operation is defined.
\begin{definition}
	An effect algebra~\cite{jacobsConvexityDualityEffects2010} is a tuple $\langle \ealg, 0, +, \blank' \rangle$ (often referred as $\ealg$) with $\langle \ealg, 0, +\rangle$ a PCM and $\blank' : E \rightarrow E$ a unary operation such that,
	for all $e \in E$:
	\begin{itemize}
		\item $e' \in E$ is the unique element in $E$ such that $e + e' = 1$ with $1 = 0'$;
		\item $e \perp 1$ implies $e = 0$.
	\end{itemize}
	An effect homomorphism is a PCM homomorphism that also preserves $\blank'$.
	The category $\eacat$ of effect algebras and effect homomorphisms is the full subcategory of $\pcmcat$ whose objects are effect algebras.
\end{definition}
Effect algebras have a partial order $\sqsubseteq$ (defined as $\preceq$) and a partial operation $e_1 - e_2$ returning the unique $e_3$ such that $e_2 + e_3 = e_1$.
%
$\eacat$ is a symmetric monoidal category with $2 = \{0,1\}$ its unit object.
There is a bijective correspondence between morphisms $\ealg_A \otimes \ealg_B \to \ealg_C$ and bihomomorphisms $\ealg_A \times \ealg_B \to \ealg_C$
(a bihomomorphism is such that the morphisms obtained by fixing either object are homomorphisms).


An example of effect algebras are probabilities: real numbers in the interval $[0, 1]$ with $+$ the arithmetic sum and $e'$ defined as $1 - e$.
Probabilities allow for defining \emph{probability (sub)distributions} over a given set $S$, i.e. functions $\Delta : S \to
	[0,1]$ such that $\sum_{s \in S}\Delta(s) \leq 1$.
%
For each $s \in S$, we let $\singleton{s}$ be the \emph{point distribution} that assigns $1$ to $s$.
Given a finite set of non-negatives reals $\{p_i\}_{i \in I}$ such that $\sum_{i \in I} p_i \leq 1$, the weighted sum $\sum_{i \in I} \distelem{p_i}{\Delta_i}$ defines a distribution such that $(\sum_{i \in I} \distelem{p_i}{\Delta_i})(s) = \sum_{i \in
		I}p_i\Delta_i(s)$.

Probability distributions form a \emph{convex set}~\cite{bonchiPowerConvexAlgebras2017}, meaning that for all distributions $\Delta, \Theta$ and for all real $p \in [0,1]$ there exists a distribution  $\Delta \psum{p} \Theta$ defined as $\distelem{p}{\Delta_1} + \distelem{(1 - p)}{\Delta_2}$.
Given a function $f$ between convex sets $X$ and $Y$, we call $f$ convex if it preserves the $\psum{p}$ operator, i.e.\ if $f(x_1 \psum{p} x_2) = f(x_1) \psum{p} f(x_2)$.
We denote as $\mathbf{Conv}(X, Y)$ the set of convex functions between $X$ and $Y$.




\subsection{Quantum Computing}



A (finite-dimensional) \emph{Hilbert space}, denoted as $\hilbert$, is a
complex vector space equipped with a binary operator $\braket\blank: \hilbert
	\times \hilbert \rightarrow \cfield$ called \emph{inner product}, defined as
$\braket{\psi}{\phi} = \sum_i \alpha_i^*\beta_i$, where $\kp =
	(\alpha_1,\ldots,\alpha_i)^T$ and $\kf = (\beta_1,\ldots,\beta_i)^T$, with $T$ the transpose.
We indicate column vectors as $\kp$ and their conjugate transpose as
$\bra\psi = \kp^\dagger$.
The state of an isolated physical system is represented as a \emph{unit vector}
$\kp$ (called \emph{state vector}), i.e.\ a vector such that $\braket\psi =
	1$.
The two-dimensional Hilbert space $\cfield^2$ is called a \emph{qubit}.
The
vectors $\{\ket0 = (1,0)^T, \ket1 = (0,1)^T\}$ form an orthonormal basis of
$\cfield^2$, called the \emph{computational basis}. Other important vectors in
$\cfield^2$ are $\ket+ = \frac{1}{\sqrt{2}}(\ket0 + \ket1)$ and $\ket-
	= \frac{1}{\sqrt{2}}(\ket0 - \ket1)$, which form the \emph{Hadamard basis}.
Intuitively, different bases represent different observable properties of a quantum system.
Note that $\kpl$ and $\km$ are non-trivial linear combinations of $\kz$ and $\ko$, roughly meaning that
the property associated with the computational basis is undetermined in $\kpl$ and $\km$.
In the quantum jargon, $\kpl$ and $\km$ are \emph{superpositions} with respect to the computational basis.
Symmetrically, $\kz$ and $\ko$ are superpositions with respect to the Hadamard one.

We represent the state space of a composite physical system as the \emph{tensor
	product} of the state spaces of its components. Let $\hilbert$ and
$\hilbert'$ be $n$ and $m$-dimensional Hilbert spaces: their tensor product
$\hilbert \otimes \hilbert'$ is an $n\cdot m$ Hilbert space. Moreover, if
$\{\ket{\psi_1}, \ldots, \ket{\psi_n}\}$ and $\{\ket{\phi_1}, \ldots,
	\ket{\phi_m}\}$ are bases of respectively $\hilbert$ and $\hilbert'$, then $
	\{\ket{\psi_i}\otimes\ket{\phi_j} \mid i = 1, \ldots, n, j = 1, \ldots, m\}$ is
a basis of $\hilbert \otimes \hilbert'$, where $\kp \otimes \kf$
is the Kronecker product.
	We often omit the tensor product and write $\kp\kf$ or $\ket{\psi\phi}$.
	Note that such product is not commutative.
	Categorically, finite-dimensional Hilbert spaces with the Kronecker product and the conjugate transpose form the dagger compact category $\mathit{FDHilb}$.
	Further references are available in \cite{coeckePicturingQuantumProcesses2017}.




The density operator formalism puts together quantum systems and probability by considering
mixed states, i.e.\ \emph{probability distributions of quantum states}.
A point distribution $\singleton{\kp}$ (called a pure state) is represented by the matrix $\ketbra{\psi}$.
In general, a probability distribution $\Delta$ of $n$-dimensional states is represented as the matrix $\rho \in \cfield^{n\times n}$, known as its \emph{density operator}, with $\rho = \sum_i \Delta(\psi_i)\ketbra{\psi_i}$.
%
Recall that a complex matrix $N$ is called \emph{positive semi-definite}, shortly positive, when $\ev{N}{\psi} \geq 0$ for all $\kp$.
The \emph{L\"owner order} is the partial order defined by $L \sqsubseteq L'$ whenever $L' - L$ is positive.
Given a $d$-dimensional Hilbert space $\hilbert$, density operators coincide with the positive matrices in $\mathbb{C}^{d\times d}$ of trace one, we denote them as $\dm[\hilbert] = \big\{\,\rho \in \mathbb{C}^{d\times d} \mid \rho \sqsupseteq 0_d, \ \tr(\rho) = 1\,\big\}$.
Density operators form a convex set, where the convex combination operator is defined by $\rho \psum{p} \sigma = p\rho + (1-p)\sigma$.

Density operators can describe the state of a subsystem of a
composite quantum system.
Let $\hilbert_S$ denote the Hilbert space of a physical system $S$, then $\hilbert_{S_1} \otimes \hilbert_{S_2}$
is the Hilbert space of a composite system with subsystems $S_1$ and $S_2$. 
Given a (not necessarily separable) $\rho \in \hilbert_{S_1} \otimes \hilbert_{S_2}$, the \emph{reduced density operator} of system $S_1$, $\rho_1 = \tr_{S_2}(\rho)$, describes the state of $S_1$, with $\tr_{S_2}$ the \emph{partial trace over $S_2$}, defined as
the linear transformation such that
$\tr_{S_2}(\ketbra{\psi}{\psi'} \otimes \ketbra{\phi}{\phi'}) =
\ketbra{\psi}{\psi'}\tr(\ketbra{\phi}{\phi'})$ for each $\ketbra{\psi}{\psi'} \in \dm[\hilbert_{S_1}]$ and $\ketbra{\phi}{\phi'} \in \dm[\hilbert_{S_2}]$.

\subsection{Quantum Effects}

\emph{Quantum measurements} are needed for describing systems that exchange
information with the environment.
Performing a measurement on a quantum state returns a probabilistic classical result and either destroys or otherwise changes the quantum system. We focus in this paper on destructive measurements.

The simplest kind of measurements are \emph{quantum effects} (simply called effects in quantum textbooks~\cite{heinosaariMathematicalLanguageQuantum2011}), i.e.\ yes-no tests over quantum systems.
%
Each effect can be represented as a positive matrix smaller than the identity in the L\"owner order.
We denote the set of effects on a $d$-dimensional Hilbert space $\hilbert$ as $\ef[\hilbert] = \big\{\,L \in \mathbb{C}^{d\times d} \mid 0_d \sqsubseteq L \sqsubseteq \I_d\,\big\}$, where $\I_d$ is the $d \times d$ identity matrix.
The probability of getting a ``yes'' outcome when measuring an effect $L$ on a state $\kp$ is given by the \emph{Born rule} $tr(L\rho)$.
Effects of dimension $d$ form an effect algebra with the matrix sum, $\I_d$ as $1$ and $L' = \I_d - L$. Furthermore, the induced partial order $\sqsubseteq$ is exactly the L\"owner order.

A measurement with $n$ outcomes is a set $\{L_1, \ldots L_n\}$ of effects,
such that the \emph{completeness} equation $\sum_{i=1}^n
	L_i = \I$ holds. If the state of the system is $\rho$ before the
measurement, then the probability of the $i$ outcome occurring is $p_i = tr(L_i\rho)$.
%
As examples of measurements, consider $M_{01}$ and $M_{\pm}$ that project a state into the elements of the computational and Hadamard basis of $\cfield^2$ respectively,
with $M_{01}$ defined as $\{\ketbra{0}, \ketbra{1}\}$ and $M_\pm$ as $\{\ketbra{+}, \ketbra{-}\}$.
%
Applying the measurement $M_{01}$ on $\kz$ returns the outcome associated with $\ketbra{0}$ with probability $1$.
When measuring $\kpl$, instead, the same result occurs with probability $\frac{1}{2}$.
Notice that a measurement for a composite system may measure only some of the qubits, e.g. $\{ \ketbra{0} \otimes \I, \ketbra{1} \otimes \I\}$ measures (in the computational basis) the first qubit of a pair.

Density operators and effects are dual, as effects are isomorphic to the convex functions from the set of density operators to the probability interval.
The isomorphism is given by the Born rule.
\begin{theorem}\label{thm:effectsiso}
	It holds that $\ef[\hilbert] \iso \mathbf{Conv}(\dm[\hilbert], [0,1])$ through the isomorphism $L \mapsto \lambda \rho\ldotp tr(L\rho)$~\cite{heinosaariMathematicalLanguageQuantum2011}.
\end{theorem}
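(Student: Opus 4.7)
The plan is to exhibit the map $\Phi : \ef[\hilbert] \to \mathbf{Conv}(\dm[\hilbert], [0,1])$, $\Phi(L) = \lambda\rho.\,tr(L\rho)$, and show it is a well-defined bijection. Well-definedness follows directly from the Born rule: since $0 \sqsubseteq L \sqsubseteq \I$ and $\rho \sqsupseteq 0$ with $tr(\rho) = 1$, cyclicity of the trace gives $0 \leq tr(L\rho) \leq tr(\rho) = 1$. Convexity of $\Phi(L)$ is then immediate from linearity of the trace: $tr(L(p\rho + (1-p)\sigma)) = p\,tr(L\rho) + (1-p)\,tr(L\sigma)$.

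For injectivity, suppose $\Phi(L_1) = \Phi(L_2)$. Evaluating at pure states $\rho = \ketbra{\psi}$ gives $\bra{\psi}L_1\ket{\psi} = \bra{\psi}L_2\ket{\psi}$ for every unit vector $\kp$. Since effects are self-adjoint (being between $0$ and $\I$ in the L\"owner order), a Hermitian matrix is determined by its associated quadratic form, so $L_1 = L_2$.

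The main work is surjectivity, which amounts to a Riesz-type representation. Given a convex $f : \dm[\hilbert] \to [0,1]$, I extend it step by step: first to the positive cone by setting $\widetilde{f}(A) = tr(A)\cdot f(A/tr(A))$ for $A \neq 0$ and $\widetilde{f}(0) = 0$. Preservation of convex combinations by $f$ translates into additivity and positive homogeneity of $\widetilde{f}$ on the positive cone; this is the step that truly uses that $f$ is convex (affine) rather than merely a function of sets. Next, every Hermitian matrix decomposes as $H = H^+ - H^-$ with $H^\pm \sqsupseteq 0$, so I extend to Hermitians by $\widetilde{f}(H) = \widetilde{f}(H^+) - \widetilde{f}(H^-)$ and verify $\mathbb{R}$-linearity, and finally to all of $\mathbb{C}^{d\times d}$ by $\mathbb{C}$-linearity using the Hermitian/anti-Hermitian decomposition. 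The Hilbert--Schmidt inner product $\langle A, B\rangle = tr(A^\dagger B)$ makes $\mathbb{C}^{d\times d}$ a finite-dimensional Hilbert space, so by standard Riesz representation there is a unique $L$ with $\widetilde{f}(A) = tr(L^\dagger A)$ for all $A$. It remains to check that $L \in \ef[\hilbert]$: self-adjointness of $L$ follows from $\widetilde{f}$ being real-valued on Hermitians, positivity of $L$ from $tr(L\ketbra{\psi}) = f(\ketbra{\psi}) \geq 0$ for every pure state, and $L \sqsubseteq \I$ from $tr((\I - L)\rho) = 1 - f(\rho) \geq 0$ on all density operators. Then $\Phi(L)(\rho) = tr(L\rho) = f(\rho)$, concluding surjectivity.

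The main obstacle is verifying that $\widetilde{f}$ is genuinely linear on the positive cone: one needs to show $\widetilde{f}(A + B) = \widetilde{f}(A) + \widetilde{f}(B)$ for arbitrary positive $A, B$, which reduces after normalizing by $tr(A+B)$ to the convexity axiom $f(\tfrac{A+B}{tr(A+B)}) = \tfrac{tr(A)}{tr(A+B)} f(\tfrac{A}{tr(A)}) + \tfrac{tr(B)}{tr(A+B)} f(\tfrac{B}{tr(B)})$; the rest is bookkeeping. The extension to Hermitian matrices must be shown well-defined (independent of the choice of decomposition $H = H^+ - H^-$), which follows from additivity on the positive cone.
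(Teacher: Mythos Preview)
Your proof is correct and follows the standard Riesz-representation route: well-definedness from the Born rule, injectivity from the polarization identity for Hermitian operators, and surjectivity by extending an affine map on states to a linear functional on all matrices and then representing it via the Hilbert--Schmidt inner product. Note, however, that the paper does not actually prove this theorem: it is stated as background and attributed to~\cite{heinosaariMathematicalLanguageQuantum2011}, with no argument given in the body or the appendix. Your write-up is essentially the textbook proof one finds in that reference, so there is nothing to compare against within the paper itself.
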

Roughly, effects can be considered as probabilities \emph{parametrized} on an unknown quantum state.

\section{Effect Distributions}\label{sec:efDist}

Here we lay some fundamental definitions about effect distributions, which generalize probability distributions by using elements of a generic effect algebra $\ealg$ as weights, instead of the usual $[0,1]$ interval.
We show that the features of effect distributions descend from the algebraic structure of their weights: if the latter have a (commutative) multiplication, then effect distribution form a (commutative) monad; if we have a morphism of weights, we get a natural transformations of distributions.
We exemplify these concepts on probability distributions and on quantum effect distributions.
Finally, we show that, like probability distributions, quantum distributions do form a commutative monad, albeit a graded one.

Following the work of~\cite{jacobsConvexityDualityEffects2010}, for each effect algebra $\ealg$ we build a functor of effect distributions.

\begin{definition}[Effect Distributions]\label{def:efFunct}
	Given an effect algebra $\langle \ealg, 0, +, \blank'\rangle$ we define the functor of \textit{effect (sub)distributions} $D_{\ealg}: \mathbf{Set} \to \mathbf{Set}$ by
	\[
		D_{\ealg}X = \left\{\,
		\Delta \in {\ealg}^{X}
		\;\middle| \;
		\supp(\Delta) \text{ is finite,} \,
		\smashoperator{\sum_{x \in \supp(\Delta)}} \Delta(x) \sqsubseteq 1_{\ealg} \right\}
		\qquad
		(\emon f)(\Delta) = \lambda y \in Y.\sum_{\mathclap{{x \in f^{-1}(y)}}} \Delta(x)
	\]
	where supp$(\Delta)$ is the set $\{\,x \in X\;|\;\Delta(x) \neq 0\,\}$, and $\sum$ is the n-ary sum in $\ealg$.
\end{definition}

Our running examples will be sub-probability distributions $D_{[0,1]}$, featuring weights the unit interval, and quantum distributions over some Hilbert space $\hilbert$, associated to the effect algebra $\ef[\hilbert]$.
Notice how, if $\hilbert$ is $1$ dimensional, the effect algebra $\ef[\hilbert]$ is isomorphic to $[0,1]$, and $D_{\ef[\hilb]}$ to $D_{[0,1]}$.
We will often write an effect distribution $\Delta = \{x_i \mapsto e_i\}_{x_i \in X}$ in a compact form, as $\Delta = \sum_{x_i \in X} e_i \bullet x_i$.
Thus, some examples of distributions on $X = \{a, b\}$ could be
\[
	\frac{1}{2} \bullet s + \frac{1}{2}\bullet t \in D_{[0, 1]}X \iso D_{\ef[\mathbb{C}]}X \qquad \ketbra{0} \bullet s + \ketbra{1} \bullet t \in D_{\ef[\mathbb{C}^2].}
\]
Intuitively, the distribution on the left model a random choice, where we toss a fair coin to decide between $s$ and $t$. The one on the right is a quantum choice, i.e. a measurement, 
where we choose by measuring a qubit in the computational basis and by taking $s$ or $t$ according to the result.

Transformation of weights gives us transformations of distributions. 
Formally, we have the following theorem, rephrasing Proposition 21 of~\cite{jacobsConvexityDualityEffects2010}.
\begin{restatable}{theorem}{morphisnattrans}\label{thm:morphisnattrans}
	Each effect morphism $m: \ealg[E] \to \ealg[F]$ yields a natural transformation $m \circ \blank :\emon[{\ealg[E]}] \Rightarrow \emon[{\ealg[F]}]$.
	Moreover, if $m$ is injective then the components of $m \circ \blank$  are injective.
\end{restatable}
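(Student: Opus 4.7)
The plan is to establish, in sequence, three claims: (a) post-composition with $m$ yields a well-defined function $(m\circ\blank)_X \colon \emon[{\ealg[E]}] X \to \emon[{\ealg[F]}] X$ for every set $X$; (b) this family is natural in $X$; (c) if $m$ is injective, so is each $(m\circ\blank)_X$. Each claim is a direct unfolding of the definitions, powered by the effect-homomorphism axioms ($m(0)=0$, $m(e+e')=m(e)+m(e')$ when $e\perp e'$, and $m(e')=m(e)'$, which in particular gives $m(1)=1$).

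For well-definedness, take $\Delta \in \emon[{\ealg[E]}] X$ and set $\Delta' \coloneqq m\circ\Delta$. Since $m(0_E)=0_F$, we have $\supp(\Delta')\subseteq\supp(\Delta)$, so $\Delta'$ still has finite support. For the bound condition, since $m$ preserves binary (hence, by induction, finitary) sums of pairwise orthogonal elements, and all the values $\Delta(x)$ for $x\in\supp(\Delta)$ are pairwise orthogonal by hypothesis, we compute
\[
\sum_{x\in\supp(\Delta')}\Delta'(x)=\sum_{x\in\supp(\Delta)}m(\Delta(x))=m\!\left(\sum_{x\in\supp(\Delta)}\Delta(x)\right).
\]
Because $\sum_{x\in\supp(\Delta)}\Delta(x)\sqsubseteq 1_{\ealg[E]}$ and $m$ preserves the order (if $a+c=b$ then $m(a)+m(c)=m(b)$, so $m(a)\sqsubseteq m(b)$) and the unit ($m(1_E)=m(0_E')=m(0_E)'=0_F'=1_F$), the above element is $\sqsubseteq 1_{\ealg[F]}$, as required.

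Naturality amounts to checking, for any $f\colon X\to Y$ and $\Delta\in \emon[{\ealg[E]}] X$, that $\emon[{\ealg[F]}](f)(m\circ\Delta)=m\circ(\emon[{\ealg[E]}](f)(\Delta))$. Evaluating both sides at $y\in Y$, the left-hand side is $\sum_{x\in f^{-1}(y)}m(\Delta(x))$ and the right-hand side is $m\bigl(\sum_{x\in f^{-1}(y)}\Delta(x)\bigr)$; these coincide by the same finitary-sum preservation used above. For the injectivity claim, suppose $(m\circ\blank)_X(\Delta_1)=(m\circ\blank)_X(\Delta_2)$. Then $m(\Delta_1(x))=m(\Delta_2(x))$ for every $x\in X$, so by injectivity of $m$ we get $\Delta_1(x)=\Delta_2(x)$ pointwise, hence $\Delta_1=\Delta_2$.

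The only mild obstacle is the step that lifts binary sum preservation to finite sum preservation on pairwise orthogonal families; this is a routine induction using commutativity and associativity of the partial operation $+$ in a PCM, which guarantees that every time we split off one summand the remaining sum is still defined. Once that is in hand, the rest of the argument is essentially bookkeeping against \autoref{def:efFunct}.
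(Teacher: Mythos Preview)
Your proof is correct and follows essentially the same approach as the paper's own proof sketch: both derive naturality from the fact that $m$ preserves finite partial sums (and distributions have finite support), and both obtain componentwise injectivity by a pointwise argument. Your version is simply more detailed, in particular making the well-definedness step and the order/unit preservation explicit, which the paper leaves implicit.
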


For example, for each $\rho \in \dm[\hilbert]$, there is an effect algebra homomorphism $m_\rho$ from $\ef[\hilbert]$ to $[0,1]$ given by $m_\rho(L) = tr(L\rho)$.
Intuitively, this homomorphism applies an input density operator $\rho$ that transforms each effect $L$ to the probability of observing the "yes" outcome when performing the $L$ measurement a quantum system in state $\rho$.
From this we get a transformation from $\Delta \in \emon[{\ef[\hilb]}]X$ to $m_\rho \circ \Delta \in \emon[{[0,1]}]X$.
Different $\rho$ yields different transformations: given $\Delta = \ketbra{0} \bullet s + \ketbra{1} \bullet t$, we have 
$$m_{\ketbra{+}} \circ \Delta = \frac{1}{2} \bullet s + \frac{1}{2} \bullet t \qquad m_{\ketbra{1}} \circ \Delta = 1 \bullet t.$$

It is well known that, when weights have a multiplicative structure, weighted distributions form a monad~\cite{jacobsProbabilitiesDistributionMonads2011,miculanSemiringWeak,bonchiSemiringUpto}.
While probabilities do have a multiplication, forming a monoid in $\eacat$ (called an effect monoid in~\cite{jacobsProbabilitiesDistributionMonads2011}), quantum effects do not.

Notice how two effects 
$L_1, L_2 \in \ef[{\hilb}]$
can be multiplied via the tensor product, but the resulting effect $L_1 \otimes L_2$ is in 
$\ef[{\hilb \otimes \hilb}]$.
Essentially, the carrier of 
$\ef[\hilb]$
is not closed for the tensor operation.
We thus introduce \emph{graded effect monoids}, allowing quantum effects to be tensored together, using grades to keep track of the resulting Hilbert space.

Instead of grading on a monoid or a monoidal category, we use a partial commutative monoid (PCM) as grades.
As we will see, this is useful for ensuring linearity of quantum resources. 
\begin{definition}[Graded Effect Monoid]
	Given a partial commutative monoid $\langle M, 0, +\rangle$, an $M$-graded effect monoid is a graded monoid object in $\eacat$, that is
	\begin{itemize}
		\item for each $m \in M$, an effect algebra $\ealg_m$
		\item for each $m \bot n \in M$, an effect morphism $\nabla_{m,n}: \ealg_m \otimes \ealg_n \to \ealg_{m + n}$
		\item an effect morphism $\eta: 2 \to \ealg_{0}$
	\end{itemize}
	such that the two following diagrams on the left commute
	\[
		\begin{tikzcd}
			(\ealg_m \otimes \ealg_n) \otimes \ealg_o \arrow[r, "\alpha"] \arrow[d, "{\nabla_{m,n} \otimes 1}"'] & \ealg_m \otimes (\ealg_n \otimes \ealg_o) \arrow[d, "{1 \otimes \nabla_{n,o}}"] & 2 \otimes \ealg_m \arrow[d, "\lambda"'] \arrow[r, "\eta \otimes id"] & \ealg_0 \otimes \ealg_m \arrow[ld, "{\nabla_{0, m}}"]  & \ealg_m \otimes \ealg_n \arrow[r, "{B_{\ealg_m, \ealg_n}}"] \arrow[d, "{\nabla_{m,n}}"'] & \ealg_n \otimes \ealg_m \arrow[ld, "{\nabla_{n,m}}"] \\
			\ealg_{m+n} \otimes \ealg_o \arrow[d, "{\nabla_{m+n, o}}"']                                      & \ealg_m \otimes \ealg_{n+o} \arrow[ld, "{\nabla_{m, n+o}}"]                 & \ealg_m                                                              &                                                & \ealg_{m+n}                                                                  &                                              \\
			\ealg_{m+n+o}                                                                                &                                                                     & \ealg_m \otimes 2 \arrow[u, "\rho"] \arrow[r, "id \otimes \eta"']    & \ealg_m \otimes \ealg_0 \arrow[lu, "{\nabla_{m, 0}}"'] &                                                                          &
		\end{tikzcd}
	\]

	If also the diagram on the right commutes, where $B$ is the braiding natural transformation, then we say that the graded effect monoid is \emph{commutative}.

\end{definition}

Each monoid object can be considered as being graded on the degenerate $1$-element monoid $\{0\}$, thus all effect monoids described in \cite{jacobsProbabilitiesDistributionMonads2011} are an example of trivially graded effect monoids.

For the quantum case, we grade our 
effect monoid over the PCM of disjoint sets of quantum systems.
Note that it is indeed impossible to perform multiple measurements on the same quantum state,
because the no-cloning theorem forbids copying quantum data, and 
measurements 
destroy the measured system.
For complying with this physical constraint, our grading keeps track of the qubits measured by the effects, and the partial sum reflects that the composition of effects is defined over disjoint systems only.

\begin{definition}[PCM of Quantum Systems]
	Assume a finite set of quantum systems $Sys = \{ S_i \}$, each associated with the Hilbert space $\hilb_{S_i}$ of finite dimension $d_i$.
	Let $\mathcal{S} = \langle \mathcal{P}(Sys), \emptyset, \uplus\rangle$ be the PCM where $\mathcal{P}(Sys)$ is the powerset of $Sys$ and $\uplus$ is the partial disjoint union, i.e. $C_i \uplus C_j$ is defined only if $C_i \cap C_j = \emptyset$, and in that case $C_i \uplus C_j = C_i \cup C_j$. We associate each collection of systems $C \in \mathcal{P}(Sys)$ with a Hilbert space defined as $\hilb_{\emptyset} = \mathbb{C}$ and $\hilb_{C} =
	\bigotimes_{S \in C} \hilb_{S}$, where we impose that the arguments of the (non-commutative) Kronecker product are ordered according to their indices.
\end{definition}
%
Assume for example $Sys$ to be the finite set of qubits $\{q_1, \ldots, q_n\}$.
Then, our collections $C \in \mathcal{S}$ are subsets of these qubits, and the grading allows us to keep track of which of them are consumed by each measurement, forbidding the tensoring of effects that use a common qubit.
Indeed, the PCM above grades an effect monoid of quantum effects.
\begin{restatable}[Effect Monoid of Quantum Effect]{theorem}{efMonQuEf}\label{thm:efMonQuEf}
	Quantum effects carry a commutative $\mathcal{S}$-graded effect monoid structure, given by:
	\begin{itemize}
		\item The effect algebra $\ealg_C$ is $\ef[\hilb_C]$ for all collection of systems $C \in \mathcal{P}(Sys)$

		\item The operator $\nabla_{C,D}:\ealg_C \otimes \ealg_D \to \ealg_{C \uplus D} $ is denoted as $\boxtimes$ and defined by $L_1 \boxtimes L_2 = Sort_{C, D}(L_1 \otimes_k L_2)$, where $\otimes_k$ is the Kronecker product between $L_1 \in \ef[\hilb_C]$ and $L_2 \in \ef[\hilb_D]$, and $Sort_{C, D}$ is the unitary transformation which ``sorts'' the Hilbert Space $\hilb_C \otimes \hilb_D$ into $\hilb_{C \uplus D}$.
		\item The effect morphism $\eta: 2 \to \ealg_\emptyset$ defined by $\{0 \mapsto 0, 1 \mapsto 1\}$.
	\end{itemize}
\end{restatable}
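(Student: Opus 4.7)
The strategy is to build the structure on top of the well-known properties of the Kronecker product of matrices, transporting everything along the unitary $Sort_{C,D}$ so that we actually land in the canonically ordered Hilbert space $\hilb_{C\uplus D}$. Part (1) -- that each $\ealg_C=\ef[\hilb_C]$ is an effect algebra -- has already been recalled in Section~\ref{sec:bg}. The unit $\eta\colon 2\to\ealg_\emptyset=\ef[\mathbb{C}]$ is an effect morphism because $\ef[\mathbb{C}]\iso[0,1]$ contains $0$ and $1$ as its bottom and top, and $\eta$ sends them to themselves. The main work is therefore to show that $\boxtimes$ is a well-defined effect morphism out of the tensor product $\ealg_C\otimes\ealg_D$, and that it satisfies the associativity, unitality and commutativity diagrams.

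For the morphism part, by the bijective correspondence between morphisms $\ealg_C\otimes\ealg_D\to\ealg_{C\uplus D}$ and bi-homomorphisms $\ealg_C\times\ealg_D\to\ealg_{C\uplus D}$ recalled in Section~\ref{sec:bg}, it suffices to check that $\boxtimes$ is a bi-homomorphism, i.e.\ that for each fixed $L_1\in\ealg_C$ the map $L_2\mapsto L_1\boxtimes L_2$ is a PCM-homomorphism, and symmetrically. Since $Sort_{C,D}$ is conjugation by a unitary, it is an effect-algebra isomorphism, so it is enough to verify the analogous facts for the bare Kronecker product $\otimes_k$. Bilinearity of $\otimes_k$ gives $L_1\otimes_k 0=0$ and $L_1\otimes_k(L_2+L_3)=L_1\otimes_k L_2+L_1\otimes_k L_3$ whenever $L_2+L_3$ is defined; positivity of the Kronecker product of two positive matrices gives $0\sqsubseteq L_1\otimes_k L_2$; and $L_1\sqsubseteq\I_C$, $L_2\sqsubseteq\I_D$ imply $L_1\otimes_k L_2\sqsubseteq\I_C\otimes_k\I_D=\I_{C\cup D}$, so the result is again an effect. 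Orthogonality is preserved by the same distributivity argument: if $L_2+L_3\sqsubseteq\I_D$, then $L_1\otimes_k L_2+L_1\otimes_k L_3=L_1\otimes_k(L_2+L_3)\sqsubseteq L_1\otimes_k\I_D\sqsubseteq\I_{C\cup D}$.

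The coherence diagrams reduce to standard coherence of the Kronecker product in $\mathit{FDHilb}$, together with the fact that the family $\{Sort_{C,D}\}$ is itself coherent. For associativity, both $(L_1\boxtimes L_2)\boxtimes L_3$ and $L_1\boxtimes(L_2\boxtimes L_3)$ are obtained from the associative product $L_1\otimes_k L_2\otimes_k L_3$ by sorting into the canonical ordering on $\hilb_{C\cup D\cup E}$, which is uniquely determined by the fixed global ordering of $Sys$; hence the two results coincide. For unitality, $\ealg_\emptyset=\ef[\mathbb{C}]$ acts on $\ealg_C$ by scalar multiplication, and $\eta(1)=1$ is the multiplicative identity, so $1\boxtimes L=L=L\boxtimes 1$ after applying the canonical unitors $\lambda,\rho$. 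For commutativity, note that $C\uplus D=D\uplus C$ as subsets of $Sys$, so the target Hilbert space is literally the same, and both $Sort_{C,D}(L_1\otimes_k L_2)$ and $Sort_{D,C}(L_2\otimes_k L_1)$ realise the effect that acts as $L_1$ on the systems in $C$ and $L_2$ on the systems in $D$; they are therefore equal as matrices in $\ef[\hilb_{C\uplus D}]$.

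The one delicate point I expect to spell out carefully is the coherence of the sorting operations, which is the quantum analogue of the pentagon/hexagon coherence in a symmetric monoidal category. Concretely, one has to argue that $Sort_{C\uplus D,E}\circ(Sort_{C,D}\otimes\mathrm{id})=Sort_{C,D\uplus E}\circ(\mathrm{id}\otimes Sort_{D,E})$ and similarly for the symmetry. This follows because each side is the unique unitary permutation sending the basis of $\bigotimes_{S\in C}\hilb_S\otimes\bigotimes_{S\in D}\hilb_S\otimes\bigotimes_{S\in E}\hilb_S$ (in the given order) to the canonical basis of $\hilb_{C\cup D\cup E}$ ordered by the global indexing on $Sys$; once this uniqueness is invoked, all three coherence squares commute on the nose.
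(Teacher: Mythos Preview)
Your proposal is correct and follows essentially the same approach as the paper's proof sketch: both reduce the coherence diagrams to the associativity and symmetry of the Kronecker product, transported along the sorting unitaries, and both argue that the various composites of $Sort$ operations agree because each realises the unique permutation into the canonically ordered space $\hilb_{C\uplus D\uplus E}$. Your treatment is in fact slightly more thorough, since you spell out why $\boxtimes$ is a bihomomorphism (hence an effect morphism out of $\ealg_C\otimes\ealg_D$), a point the paper's sketch leaves implicit.
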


Since quantum effects are only a PCM-graded monoid and not a monoid in $\eacat$, the resulting distribution functor is a PCM-graded monad and not a monad.
\begin{definition}[Graded Monad]
	Given a partial commutative monoid $\langle M, 0, +\rangle$, an $M$-graded monad on $\setcat$ is a graded monoid in the category of $\setcat$-endofunctors, that is
	\begin{itemize}
		\item for each $m \in M$, an endofunctor $T_m: \setcat \to \setcat$
		\item a natural transformation $\eta: Id \to T_{0}$ called \emph{unit}
		\item for each $m \bot n \in M$, a natural transformation $\mu_{m,n}: T_mT_n \to T_{m + n}$ called \emph{multiplication}
	\end{itemize}
	such that the following diagrams commute
	\[
		\begin{tikzcd}
			T_mT_nT_o \arrow[d, "{\mu_{m,n}T_o}"'] \arrow[r, "{T_m\mu_{n,o}}"] & T_{m}T_{n+o} \arrow[d, "{\mu_{m,n+o}}"] &  & T_m \arrow[rd, Rightarrow, no head] \arrow[d, "\eta T_m"'] \arrow[r, "T_m\eta"] & T_mT_0 \arrow[d, "{\mu_{m,0}}"] \\
			T_{m+n}T_o \arrow[r, "{\mu_{m+n, o}}"']                            & T_{m+n+o}                               &  & T_0T_m \arrow[r, "{\mu_{0,m}}"']                                                & T_m
		\end{tikzcd}
	\]
\end{definition}
PCM-graded monads generalize the notion of monoid-graded monad of~\cite{katsumata2014effect}, asking for an associative multiplication only when the grades are orthogonal.

Es expected, whenever $\{\ealg_m\}$ is a graded effect monoid, $\{\emon[\ealg_m]\}$ has a graded monad structure.
\begin{restatable}[Graded Monads of Graded Effect Monoids]{theorem}{gmofgem}\label{thm:gmofgem}
	If $\{\ealg_m\}$ is an $M$-graded effect monoid, there is a graded monad $\{\emon[\ealg_m]\}$ with unit $\eta: Id \to \emon[\ealg_0]$ and multiplication $	\mu_{m,n}: \emon[\ealg_m]\emon[\ealg_n] \to \emon[\ealg_{m+n}]$ given by
	\[ \eta(x) = 1_{\ealg_0} \bullet x \qquad
		\mu_{m,n}(\sum\nolimits_i e_i \bullet \Delta_i) x = \sum\nolimits_i \nabla_{m,n}(e_i, \Delta_i(x))
	\]
\end{restatable}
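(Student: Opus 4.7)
\textbf{Proof plan for \autoref{thm:gmofgem}.}
The plan is to check, in order, (i) well-definedness of $\eta$ and $\mu_{m,n}$, (ii) naturality, and then (iii) the associativity and unit squares of a graded monad. All three will be reduced to the corresponding properties of the underlying graded effect monoid $\{\ealg_m, \nabla_{m,n}, \eta\}$. The one subtle point is that since the grading is on a PCM rather than a monoid, at every step one must check that the grades involved are pairwise orthogonal so the products $\nabla_{m,n}$ and the $\ealg$-valued sums are defined.

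First I would verify that $\eta(x) = 1_{\ealg_0} \bullet x$ is an element of $\emon[\ealg_0] X$; this is immediate since its support is $\{x\}$ and the total weight is $1_{\ealg_0}$. For $\mu_{m,n}$, given $\Phi = \sum_i e_i \bullet \Delta_i \in \emon[\ealg_m]\emon[\ealg_n] X$, I would check that the total mass $\sum_{x \in X} \sum_i \nabla_{m,n}(e_i, \Delta_i(x))$ is defined and $\sqsubseteq 1_{\ealg_{m+n}}$. Using that $\nabla_{m,n}$ is an effect bihomomorphism, swap the two sums and fold to get $\sum_i \nabla_{m,n}(e_i, \sum_{x} \Delta_i(x)) \sqsubseteq \sum_i \nabla_{m,n}(e_i, 1_{\ealg_n}) = \nabla_{m,n}\bigl(\sum_i e_i, 1_{\ealg_n}\bigr) \sqsubseteq \nabla_{m,n}(1_{\ealg_m}, 1_{\ealg_n}) = 1_{\ealg_{m+n}}$, where the last equality uses that effect morphisms preserve $1$. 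Naturality of $\eta$ in $f: X \to Y$ is trivial since both sides send $x$ to $1_{\ealg_0} \bullet f(x)$. For naturality of $\mu_{m,n}$, unfold both $(\emon[\ealg_{m+n}] f) \circ \mu_{m,n}$ and $\mu_{m,n} \circ \emon[\ealg_m]\emon[\ealg_n] f$ pointwise at $y \in Y$, and show both equal $\sum_i \nabla_{m,n}\bigl(e_i, \sum_{x \in f^{-1}(y)} \Delta_i(x)\bigr)$ using bihomomorphicity of $\nabla_{m,n}$.

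For the three monad diagrams, I would translate each to the corresponding diagram for the graded effect monoid. The associativity square commutes on an element $\sum_i e_i \bullet \sum_j f_{ij} \bullet \Delta_{ij} \in \emon[\ealg_m]\emon[\ealg_n]\emon[\ealg_o] X$ because, evaluated at $x$, both composites yield $\sum_{i,j} \nabla_{m+n, o}(\nabla_{m,n}(e_i, f_{ij}), \Delta_{ij}(x))$ and $\sum_{i,j} \nabla_{m, n+o}(e_i, \nabla_{n,o}(f_{ij}, \Delta_{ij}(x)))$ respectively, and these agree by the associativity pentagon for $\nabla$ (together with the standard coherence of $\alpha$ in $\eacat$). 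The left unit $\mu_{0,m} \circ \eta \emon[\ealg_m] = \mathrm{id}$ reduces to $\nabla_{0,m}(1_{\ealg_0}, e) = e$, which is exactly the left unit triangle for $\nabla$; similarly the right unit uses $\nabla_{m,0}(e, 1_{\ealg_0}) = e$.

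The main obstacle I expect is bookkeeping with the PCM grading: at each use of a sum in $\ealg_{m+n}$ or of $\nabla$, one must justify that the two arguments have orthogonal grades so the operation is defined. This is handled once and for all by observing that every application of $\nabla_{m,n}$ inside $\mu_{m,n}$ uses the pair $(m, n)$ which is orthogonal by hypothesis, and that every $\ealg$-sum that appears sits inside a single $\ealg_k$ where the defining partial sums of effect subdistributions are already known to exist. Apart from this, all computations are mechanical rewrites using that $\nabla$ is a bihomomorphism and preserves $0, 1$.
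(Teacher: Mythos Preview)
Your proposal is correct and follows essentially the same approach as the paper: both reduce the graded monad laws for $\{\emon[\ealg_m]\}$ to the corresponding unitality and associativity laws of the underlying graded effect monoid $\{\ealg_m,\nabla_{m,n},\eta\}$. Your write-up is considerably more detailed than the paper's one-line sketch, in particular spelling out well-definedness of $\mu_{m,n}$ via bihomomorphicity of $\nabla$ and the PCM orthogonality bookkeeping, but the underlying idea is identical.
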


For brevity, we will write $Q_C$ for $\emon[{\ef[\hilbert_C]}]$ in the following.
Recall that, in the probabilistic case, $\mu$ corresponds to the weighted sum of probability distributions~\cite{hennessyExploringProbabilisticBisimulations2012}.
For quantum distributions, instead, multiplication models concatenated quantum measurements, by taking the (sorted) Kronecker product of the effects:
given a set of quantum effect distributions $\Delta_i$ in 
$Q_C X$
and a quantum distribution $\Theta$ in 
$Q_{C'} (Q_{C} X)$
associating each distribution $\Delta_i$ with a quantum effect, the multiplication returns a distribution in 
$Q_{C' \uplus C} X$
associating
each $x \in X$ with $\sum_{i} \Theta(\Delta_i) \boxtimes \Delta_i(x)$ (if $C$ and $C'$ are disjoint).
This coincides with the intuition of measuring first the qubits in $C$, and then, based on the outcome, performing a second measurement over $C'$, i.e. $\Delta_i$.

For later use, we finally define commutative graded monads.
Commutativity allows us to reduce the pairing of effect distributions to a distribution of pairs, permitting a well-behaved definition of the parallel composition of effect distributions.
For the (trivially graded) probabilistic case, this corresponds to the joint probability distribution construction à la~\cite{Sokolova2004}.
\begin{definition}[Commutative Graded Monad]\label{def:cgm}
	An $M$-graded monad $\{T_m\}$ on $\setcat$ is strong if it has \emph{left strength} $\sigma_{m, X, Y}: X \times T_mY \to T_m(X \times Y)$ and a \emph{right strength} $\tau_{m, X, Y}: T_mX \times Y \to T_m(X \times Y)$ which respect the monoidal structure $\times$ on $\setcat$ and the graded multiplication and unit of $T$ (the diagrams it must satisfy are just the graded version of the usual ones for strong monads).
	A strong graded monad is \emph{commutative} if for all $m \bot n \in M$ there is a canonical natural transformation $\alpha : T_m X \times T_n Y \to T_{m+n}(X \times Y)$
	defined by any of the two compositions $\mu_{m,n} \circ T_m\sigma_n \circ \tau_m = \mu_{n,m} \circ T_n\tau_m \circ \sigma_n$.
\end{definition}

\begin{restatable}{theorem}{commute}\label{thm:commute}
	If $\{\ealg_m\}$ is a commutative graded effect monoid, then $\{\emon[\ealg_m]\}$ is a commutative graded monad with its canonical strength.
\end{restatable}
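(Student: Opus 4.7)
The plan is to exhibit explicit formulas for the left and right strengths, verify the graded versions of the strength axioms as routine bookkeeping, and then reduce commutativity of the graded monad to commutativity of the graded effect monoid via the braiding diagram.

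First I would define the strengths pointwise. For $\Delta = \sum_i e_i \bullet y_i \in \emon[\ealg_m]Y$, set
\[
\sigma_{m,X,Y}(x, \Delta) \;=\; \sum\nolimits_i e_i \bullet (x, y_i) \;\in\; \emon[\ealg_m](X \times Y),
\]
and symmetrically for $\tau_{m,X,Y}(\Delta', y)$ with $\Delta' = \sum_i e_i \bullet x_i$. These are well defined: the coefficients live in the same $\ealg_m$, and the support condition is preserved because we only reindex along injections $y_i \mapsto (x, y_i)$. Naturality in $X$ and $Y$ is immediate from the functorial action in \autoref{def:efFunct}, since $(\emon f)$ pushes forward along $f$ and commutes with reindexing.

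Next I would verify the (graded) strength axioms. The unit coherence $\sigma_{0,X,Y} \circ (\mathrm{id}_X \times \eta_Y) = \eta_{X \times Y}$ follows because $\eta$ sends $y$ to $1_{\ealg_0} \bullet y$, and $\sigma_0$ reindexes this to $1_{\ealg_0} \bullet (x,y)$. The multiplication coherence $\sigma_{m+n} \circ (\mathrm{id} \times \mu_{m,n}) = \mu_{m,n} \circ \emon[\ealg_m]\sigma_n \circ \sigma_m$ unfolds, using the formula for $\mu$ from \autoref{thm:gmofgem}, to the equality
\[
\sum\nolimits_i \nabla_{m,n}(e_i, \Delta_i(y)) \;=\; \sum\nolimits_i \nabla_{m,n}(e_i, \Delta_i(y)),
\]
after reindexing both sides along $y \mapsto (x,y)$; this is a tautology. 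The interaction with the monoidal unit and associator of $(\setcat, \times)$ is analogous, and the same arguments apply symmetrically to $\tau$.

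For commutativity, take $\Delta = \sum_i e_i \bullet x_i \in \emon[\ealg_m]X$ and $\Theta = \sum_j f_j \bullet y_j \in \emon[\ealg_n]Y$ with $m \perp n$. Computing $\mu_{m,n} \circ \emon[\ealg_m]\sigma_n \circ \tau_m$ on $(\Delta, \Theta)$ yields the distribution
\[
(x,y) \;\longmapsto\; \sum\nolimits_{\{i,j\,:\,x_i = x,\, y_j = y\}} \nabla_{m,n}(e_i, f_j),
\]
whereas $\mu_{n,m} \circ \emon[\ealg_n]\tau_m \circ \sigma_n$ gives the same formula but with $\nabla_{n,m}(f_j, e_i)$. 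The commutativity diagram of the graded effect monoid, composed with the braiding $B_{\ealg_m, \ealg_n}$, gives exactly $\nabla_{m,n}(e_i, f_j) = \nabla_{n,m}(f_j, e_i)$, so the two composites agree pointwise.

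The main obstacle is not conceptual but bookkeeping: one must be careful that every intermediate distribution lives in the correct graded component (for instance, $\sigma_n(x_i, \Theta) \in \emon[\ealg_n](X \times Y)$ before applying $\emon[\ealg_m]$, and $\mu_{m,n}$ then lands in $\emon[\ealg_{m+n}]$), and that the partial sum $m + n$ is defined throughout, which is guaranteed by the hypothesis $m \perp n$. The crux is that the \emph{commutativity} of the graded effect monoid is used precisely once, and only in the final step; everything else is formal and follows from the graded monad structure of \autoref{thm:gmofgem}.
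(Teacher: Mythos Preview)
Your proposal is correct and follows essentially the same approach as the paper: define the canonical strengths by reindexing, check the strength axioms by unfolding the formulas, and reduce the commutativity of the two composites to the equality $\nabla_{m,n}(e_i,f_j)=\nabla_{n,m}(f_j,e_i)$ supplied by the commutative graded effect monoid. Your treatment is in fact slightly more detailed than the paper's sketch, which simply asserts that the canonical strengths make the monad strong and then performs the same final computation.
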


\section{$D_{\ealg}$-Coalgebras for Quantum Systems}\label{sec:elts}
%

We first investigate coalgebras defined on effect distributions and their bisimilarities.
Then, we focus on the specific case of quantum systems, discussing the relation between probabilistic and quantum coalgebras and characterizing correctness for a quantum behavioural equivalence.

\subsection{$D_{\ealg}$-Coalgebras and their Bisimilarities}

We start by recalling the definitions of coalgebra and coalgebra homomorphism.
\begin{definition}[Coalgebra]
	Let $F : \setcat \to \setcat$ be an endofunctor on the $\setcat$ category.
	An \emph{$F$-coalgebra} is a pair $(X, c)$, with $X$ an object of $\setcat$, and a morphism $c : X \to FX$ (also written $X \xrightarrow{c} FX$).

	Given two $F$-coalgebras $(X, c)$ and $(Y, d)$,
		A morphism $f : X \to Y$ is an \emph{$F$-coalgebra homomorphism}, written $f : (X, c) \to (Y, d)$, if $Ff \circ c = d \circ f$.
		$F$-coalgebra homomorphisms include the identity and are closed for composition, and thus $F$-coalgebras constitute a category, denoted as $\coalgcat$.
	%
\end{definition}
We consider the two most common kinds of bisimilarity on coalgebras: AM-bisimilarity and kernel bisimilarity, of which we recall the definitions below.

\begin{definition}[Aczel-Mendler Bisimilarity]
	A relation $R \subseteq X \times Y$ is an \emph{AM}-bisimulation  between the $F$-coalgebras $X \xrightarrow{c} FX$ and $Y \xrightarrow{d} FY$ if there exists an $F$-coalgebra $R \xrightarrow{e} FR$ such that the projections $\pi_1: R \to X$ and $\pi_2: R \to Y$ are coalgebra homomorphisms, i.e. the diagram in \autoref{fig:ambisimcd} commutes.
	Two states $x \in X$ and $y \in Y$ are \emph{AM-bisimilar}, written $x \sim_{AM} y$, if $x\,R\,y$ for some AM-bisimulation $R$.
\end{definition}

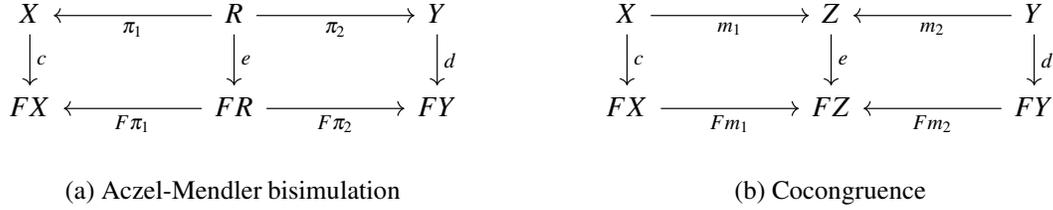
\begin{figure}
	\begin{subfigure}{0.49\textwidth}
		\begin{center}
			\begin{tikzcd}
				X \arrow[d, "c"] &  & R \arrow[d, "e"] \arrow[ll, "\pi_1"] \arrow[rr, "\pi_2"'] &  & Y \arrow[d, "d"] \\
				FX &  &  FR \arrow[ll, "F\pi_1"] \arrow[rr, "F\pi_2"'] &  & FY 
			\end{tikzcd}
		\end{center}
		\caption{Aczel-Mendler bisimulation}
		\label{fig:ambisimcd}
	\end{subfigure}
	\begin{subfigure}{0.49\textwidth}
		\begin{center}
			\begin{tikzcd}
				X \arrow[d, "c"] \arrow[rr, "m_1"'] &  & Z \arrow[d, "e"] &  & Y \arrow[d, "d"] \arrow[ll, "m_2"] \\
				FX \arrow[rr, "Fm_1"'] &  & FZ &  & FY \arrow[ll, "Fm_2"]                                  
			\end{tikzcd}
		\end{center}
		\caption{Cocongruence}
		\label{fig:kbisimcd}
	\end{subfigure}
	\caption{The two commutative diagrams defining AM and Kernel bisimulations, respectively}
\end{figure}

\begin{definition}[Kernel Bisimilarity]
	A \emph{cocongruence} between two $F$-coalgebras $X \xrightarrow{c} FX$ and $Y \xrightarrow{d} FY$ is a cospan $X \xrightarrow{m_1}Z \xleftarrow{m_2} Y$ such that there exists an $F$-coalgebra $Z \xrightarrow{e} FZ$ making $m_1$ and $m_2$ coalgebra homomorphisms, i.e. the diagram in \autoref{fig:kbisimcd} commutes.
	We call \emph{kernel bisimulation} the pullback in $\setcat$ of a cocongruence $X \xrightarrow{m_1}Z \xleftarrow{m_2} Y$, that is the relation $R$ such that $x\,R\,y$ if and only if $m_1(x) = m_2(y)$.
	\emph{Kernel bisimilarity} $\sim_{k}$ is the largest kernel bisimulation.
\end{definition}

In particular, we are interested in the case where the endofunctor $F$ is $D_{\ealg}$ for some effect algebra $\ealg$.
In this case, the equality of kernel and AM bisimilarities entirely depends on the underlying effect algebra $\ealg$.
In line with the results in \cite{GUMM2001185,ogawaCoalgebraicApproachEquivalences2014}, the two relations coincide if and only if $\ealg$ is decomposable.
\begin{definition}[Decomposable Effect Algebra]
	We say that an effect algebra $\ealg$ is \textit{decomposable} if for all $a, b, c, d \in \mathbb{E}$ such that $a \perp b$, $c \perp d$ and $a + b = c + d$, there exists $e_{11}, e_{12}, e_{21}, e_{22} \in \mathbb{E}$ such that
	$a = e_{11}+e_{12}$, $b = e_{21}+e_{22}$, $c = e_{11}+e_{21}$ and $d = e_{12}+e_{22}$.
\end{definition}

\begin{restatable}{theorem}{amkequality}\label{thm:amkequality}
	Let $\ealg$ be an effect algebra.
	Let $X \xrightarrow{c} \emon X$ and $Y \xrightarrow{d} \emon Y$ be two $\emon$-coalgebras.
	For all $x \in X$ and $y \in Y$: $(i)$ $x \sim_{AM} y \Longrightarrow x \sim_{k} y$, and $(ii)$ $x \sim_{k} y \Longrightarrow x \sim_{AM} y$ if and only if $\mathbb{E}$ is decomposable.
\end{restatable}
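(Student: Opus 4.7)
For implication $(i)$, the plan is to use the standard pushout construction, which works for any $\setcat$-endofunctor and does not rely on decomposability. Given an AM-bisimulation $R \xrightarrow{e} \emon R$ between $(X, c)$ and $(Y, d)$, I would form the pushout $X \xrightarrow{m_1} Z \xleftarrow{m_2} Y$ of $\pi_1, \pi_2$ in $\setcat$. The two composites $\emon m_1 \circ c$ and $\emon m_2 \circ d$ coequalise $\pi_1, \pi_2$, since both equal $\emon(m_1\pi_1) \circ e$ by the AM diagram. Hence the universal property of the pushout yields a unique $e_Z : Z \to \emon Z$ making the cospan a cocongruence. Since $R$ is contained in the pullback of $(m_1, m_2)$, AM-bisimilar elements are kernel bisimilar.

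For the $(\Leftarrow)$ direction of $(ii)$, assuming decomposability, I would first establish an auxiliary lemma lifting binary decomposability to an $n \times m$ matrix version by induction on $n + m$: given $\alpha_1, \dots, \alpha_n, \beta_1, \dots, \beta_m \in \ealg$ with $\sum_i \alpha_i = \sum_j \beta_j$, there exist $e_{ij}$ with row sums $\alpha_i$ and column sums $\beta_j$. Then, given a cocongruence $X \xrightarrow{m_1} Z \xleftarrow{m_2} Y$ with structure $e_Z$ and pullback $R$, for each $(x, y) \in R$ with $m_1(x) = m_2(y) = z$, I would construct $e_R(x, y) \in \emon R$ fibrewise: for each $z_k \in \supp(e_Z(z))$, the restrictions of $c(x)$ to $m_1^{-1}(z_k)$ and of $d(y)$ to $m_2^{-1}(z_k)$ both sum to $e_Z(z)(z_k)$ (because $m_1, m_2$ are coalgebra homomorphisms), so the lemma produces a joint distribution on $m_1^{-1}(z_k) \times m_2^{-1}(z_k) \subseteq R$ with the right marginals. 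Assembling these across all $k$ yields $e_R(x,y)$ satisfying $\emon\pi_1(e_R(x,y)) = c(x)$ and $\emon\pi_2(e_R(x,y)) = d(y)$, making $R$ an AM-bisimulation.

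For the $(\Rightarrow)$ direction of $(ii)$, I would argue by contrapositive: assume witnesses $\alpha, \beta, \gamma, \delta \in \ealg$ with $\alpha + \beta = \gamma + \delta$ admitting no $2 \times 2$ decomposition. Construct $X = \{x_0, x_+, x_-\}$ with $c(x_0) = \alpha \bullet x_+ + \beta \bullet x_-$ and trivial successors, $Y = \{y_0, y_+, y_-\}$ with $d(y_0) = \gamma \bullet y_+ + \delta \bullet y_-$, and the cocongruence $Z = \{z_0, z_1\}$ collapsing $x_0, y_0$ to $z_0$ and all successors to $z_1$, with $e_Z(z_0) = (\alpha+\beta) \bullet z_1$. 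Then $x_0 \sim_k y_0$, but any AM-bisimulation would need a distribution on $\{x_+, x_-\} \times \{y_+, y_-\}$ with marginals $c(x_0)$ and $d(y_0)$, i.e.\ precisely the forbidden decomposition. The main obstacle is the inductive proof of the $n \times m$ decomposition lemma: the effect-algebra axiom provides only binary decomposition, and carefully tracking which entries remain free at each split step is the combinatorial heart of the argument; once this lemma is secured, the fibrewise assembly and the counterexample construction are routine.
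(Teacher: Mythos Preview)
Your proposal is correct and, for parts $(i)$ and the $(\Rightarrow)$ direction of $(ii)$, essentially identical to the paper's proof: the pushout-of-an-AM-span argument is exactly what underlies the cited result, and your counterexample for non-decomposable $\ealg$ is the same three-state construction the paper gives, with the same matrix-marginal obstruction.

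For the $(\Leftarrow)$ direction of $(ii)$ you take a more direct route than the paper. The paper argues abstractly: decomposability of $\ealg$ implies that $\emon$ weakly preserves weak pullbacks (citing Rutten, Theorem~5.13, noting that the proof there for commutative monoids goes through for PCMs), and then weak pullback preservation implies $\sim_k \Rightarrow \sim_{AM}$ (citing Bartels). You instead inline the essential content of Rutten's argument: you lift binary decomposability to an $n\times m$ matrix lemma and then build the AM-structure on the pullback of a given cocongruence fibrewise, bypassing the intermediate notion of weak pullback preservation entirely. Both are sound; the paper's version is shorter by delegation, while yours is self-contained and makes explicit exactly where decomposability is consumed. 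The only care point in your version is that the assembled $e_R(x,y)$ is a well-defined element of $\emon R$ (all partial sums exist and the total is $\sqsubseteq 1$), but this follows from associativity and positivity of effect algebras, since the total refines $\sum_{x'} c(x)(x')$.
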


Another topic we are interested in is the relation between $D_{\ealg}$-coalgebras over possibly different effect algebras, namely $D_{\ealg}$-coalgebras and $D_{\ealg'}$-coalgebras for a possibly different $\ealg'$.
In particular, we want to know under which conditions a mapping from $D_{\ealg}$-coalgebras to $D_{\ealg'}$-coalgebras preserves and reflects bisimilarities.
Note that natural transformations between the functors that define a coalgebra have the important property of preserving bisimilarities, as demonstrated in \cite{bartelsHierarchyProbabilisticSystem2004}.
\begin{theorem}\label{thm:nattransisfunc}
	Let $\alpha : F \Rightarrow G$ be a natural transformation between two functors $F$ and $G : \setcat \to \setcat$.
	The natural transformation $\alpha$ induces a functor, denoted $\alpha \circ \blank : \coalgcat[F] \to \coalgcat[G]$, which maps objects $(X, c)$ to $(X, \alpha_X \circ c)$ and homomorphisms $f : (X, c) \to (Y, d)$ to homomorphisms $f : (X, \alpha_X \circ c) \to (Y, \alpha_Y \circ d)$.
	Furthermore, bisimilarities are preserved by this functor.
\end{theorem}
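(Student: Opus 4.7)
The plan is to split the statement into two parts: first verifying that $\alpha \circ \blank$ is actually a functor $\coalgcat[F] \to \coalgcat[G]$, and then showing that both species of bisimilarity named in the excerpt (Aczel-Mendler and kernel) transport along it. In both halves the only tool one needs is the naturality square of $\alpha$, pasted with whichever commuting square already holds for the $F$-side.

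For functoriality, the object part $(X,c)\mapsto(X,\alpha_X\circ c)$ is a legitimate $G$-coalgebra by construction, so the content is in checking well-definedness on morphisms. Given an $F$-coalgebra homomorphism $f:(X,c)\to(Y,d)$, I need $Gf\circ(\alpha_X\circ c)=(\alpha_Y\circ d)\circ f$. I would chain two squares: by naturality of $\alpha$ at $f$, $Gf\circ\alpha_X=\alpha_Y\circ Ff$, and then $Ff\circ c=d\circ f$ because $f$ is an $F$-coalgebra homomorphism. Preservation of identities and composition is then immediate since the underlying set map is unchanged, so $\alpha\circ\blank$ is genuinely a functor.

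For preservation of bisimilarities, I would handle the two notions in parallel by showing that the witnessing coalgebra is transported by post-composition with $\alpha$. If $R\xrightarrow{e}FR$ witnesses an AM-bisimulation between $(X,c)$ and $(Y,d)$ with projections $\pi_1,\pi_2$, I propose $R\xrightarrow{\alpha_R\circ e}GR$ as the corresponding witness for the images. For each $i\in\{1,2\}$ the required square $G\pi_i\circ(\alpha_R\circ e)=(\alpha_X\circ c)\circ\pi_i$ (resp.\ with $d,Y$) follows by first commuting $G\pi_i\circ\alpha_R=\alpha_{X}\circ F\pi_i$ via naturality and then applying the AM-bisimulation square of \autoref{fig:ambisimcd} for the $F$-coalgebras. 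The argument for a cocongruence $X\xrightarrow{m_1}Z\xleftarrow{m_2}Y$ with witness $e:Z\to FZ$ is entirely symmetric: $\alpha_Z\circ e$ is the witness for the cocongruence of the $G$-coalgebras, using naturality of $\alpha$ at $m_1$ and $m_2$ together with the diagram in \autoref{fig:kbisimcd}. Since the same cospan realises the cocongruence, its pullback (the kernel bisimulation) is unchanged as a relation, hence kernel bisimilarity is preserved.

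There is no real obstacle here beyond careful diagram bookkeeping. The one point to highlight is that both arguments use exactly the same trick, namely that naturality of $\alpha$ allows post-composition of any $F$-coalgebra structure by $\alpha$ while leaving all underlying set maps untouched, so every commuting square for $F$ has a canonical companion for $G$. This is why the functor acts as the identity on the carriers and homomorphisms and merely repackages the coalgebra structures, which is also what makes the preservation of bisimilarities essentially automatic.
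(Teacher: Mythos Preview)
Your argument is correct and is exactly the standard one: post-composition with $\alpha$ turns $F$-coalgebra structures into $G$-coalgebra structures while leaving carriers and underlying maps fixed, and naturality of $\alpha$ supplies the missing square each time. There is nothing to add.

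Regarding comparison with the paper: the paper does not actually give a proof of this theorem. It is stated as a known fact with a citation to Bartels et al.\ (``as demonstrated in~\cite{bartelsHierarchyProbabilisticSystem2004}''), and no proof sketch for it appears in the appendix. Your write-up is essentially the argument that reference contains, so you are in full agreement with the intended proof, just spelling out what the paper outsources.
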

It is then easy to assess that effect morphisms define functors that preserve bisimilarity, thanks to~\autoref{thm:morphisnattrans}.
Note however that functors induced by natural transformations do not reflect bisimilarities, in general.
In \cite{bartelsHierarchyProbabilisticSystem2004} further conditions are identified in order for functors to reflect bisimilarities.
\begin{restatable}{theorem}{alphainducef}
	Let $\alpha : F \Rightarrow G$ be a natural transformation between two functors $F$ and $G : \setcat \to \setcat$.
	If all components $\alpha_X$ of $\alpha$ are injective then the induced functor $\alpha \circ \blank$ reflects the kernel bisimilarity.
\end{restatable}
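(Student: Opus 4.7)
The plan is to take a cocongruence in $\coalgcat[G]$ that witnesses $x \sim_k y$ and, by restricting to its image, produce a corresponding cocongruence in $\coalgcat[F]$ witnessing the same pair. Concretely, assume $x \sim_k y$ between $(X, \alpha_X \circ c)$ and $(Y, \alpha_Y \circ d)$ via a cocongruence $X \xrightarrow{m_1} (Z, e') \xleftarrow{m_2} Y$ with $m_1(x) = m_2(y)$. I would set $Z' = m_1(X) \cup m_2(Y)$ with inclusion $i : Z' \hookrightarrow Z$ and co-restrictions $m_1', m_2'$, so that $m_1 = i \circ m_1'$ and $m_2 = i \circ m_2'$, and try to equip $Z'$ with an $F$-coalgebra structure $e : Z' \to FZ'$ turning $m_1', m_2'$ into $F$-coalgebra homomorphisms.

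The natural definition is $e(m_1'(x)) := Fm_1'(c(x))$ and $e(m_2'(y)) := Fm_2'(d(y))$; the crux is well-definedness. Suppose $m_1'(x) = m_1'(x')$. By naturality of $\alpha$ and the $G$-coalgebra equation $e' \circ m_1 = Gm_1 \circ \alpha_X \circ c$,
\[
\alpha_Z(Fm_1(c(x))) = Gm_1(\alpha_X(c(x))) = e'(m_1(x)) = e'(m_1(x')) = \alpha_Z(Fm_1(c(x'))).
\]
Injectivity of $\alpha_Z$ gives $Fm_1(c(x)) = Fm_1(c(x'))$, i.e.\ $Fi(Fm_1'(c(x))) = Fi(Fm_1'(c(x')))$. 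Since $Z'$ is nonempty (it contains $m_1(x) = m_2(y)$), $i$ admits a retraction in $\setcat$, so $Fi$ is a split mono and hence injective, yielding $Fm_1'(c(x)) = Fm_1'(c(x'))$. The other compatibility cases (two $m_2'$-preimages, or one of each) are analogous. Then $m_1'$ and $m_2'$ are $F$-coalgebra homomorphisms by construction, and $X \xrightarrow{m_1'} (Z', e) \xleftarrow{m_2'} Y$ is the sought cocongruence in $\coalgcat[F]$ with $m_1'(x) = m_2'(y)$, proving $x \sim_k y$ in $\coalgcat[F]$.

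The main obstacle is precisely this well-definedness step: injectivity of $\alpha_Z$ only forces equalities in $FZ$, not in $FZ'$, so one additionally needs $F$ to preserve $i$ as a mono. This is the reason for restricting to the image of $m_1 \sqcup m_2$: working directly with $Z$ would require lifting $e'$ itself along $\alpha_Z$, but $e'(z)$ need not lie in the image of $\alpha_Z$ for $z$ outside $m_1(X) \cup m_2(Y)$. Passing to $Z'$ sidesteps this, at the modest price of needing $Fi$ injective, which is automatic in $\setcat$ because the nonempty inclusion $i$ splits and split monos are preserved by every functor.
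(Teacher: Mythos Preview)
Your proof is correct. The well-definedness argument via injectivity of $\alpha_Z$ together with the split-mono trick for $Fi$ is sound, and the mixed case $m_1'(x)=m_2'(y)$ goes through by the same chain of equalities using the two homomorphism equations for $m_1$ and $m_2$.

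The paper takes a shorter, less self-contained route: it simply invokes \cite[Theorem~5]{bartelsHierarchyProbabilisticSystem2004}, which already establishes that a cocongruence for the $G$-coalgebras $(X,\alpha_X\circ c)$ and $(Y,\alpha_Y\circ d)$ yields a cocongruence (on the same cospan) for the $F$-coalgebras $(X,c)$ and $(Y,d)$ whenever $\alpha$ is componentwise injective. What you have done is essentially unpack and re-prove the relevant part of that cited result by hand. Your restriction to the image $Z'=m_1(X)\cup m_2(Y)$ is the natural way to make the lifting explicit, and your observation that one cannot work directly with $Z$ because $e'(z)$ need not land in the image of $\alpha_Z$ is exactly the obstruction that forces this restriction. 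The advantage of your route is that it is fully self-contained and exposes where each hypothesis (injectivity of $\alpha$, the split-mono property of nonempty inclusions in $\setcat$) is used; the paper's version is terser but relies on the reader tracking down the external reference.
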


\subsection{The Quantum Case}


Hereafter, we apply the results above to the specific case of probability and quantum effect distributions and their monads.
Recall that, for brevity, we write $Q_C$ for $\emon[{\ef[\hilbert_C]}]$.
Note that coalgebras $(X, c: X \to \emon X)$ are essentially a generalization of Markov Chains, coinciding with the usual definition but with partial distributions if $\ealg = [0,1]$.
$Q_C$-coalgebras can be seen instead as a quantum version of Markov Chains.

We start by noticing that kernel and AM-bisimilarities coincide in the probabilistic case, thanks to~\autoref{thm:amkequality} and since $[0,1]$ is decomposable (as demonstrated implicitly in \cite{mossCoalgebraicLogic1999}).
On the contrary, the equality does not hold in the quantum case.
Thanks to~\autoref{thm:amkequality}, it suffices to show the following.
%
\begin{restatable}{proposition}{efnotdec}\label{thm:efnotdec}
	If the dimension of $\hilbert$ is greater or equal than $2$, then $\ef[\hilbert]$ is not decomposable.
\end{restatable}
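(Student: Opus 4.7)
The plan is to exhibit an explicit counterexample, using two mutually unbiased bases on a two-dimensional subspace, and then show that no decomposition of the required form can exist because the ``meet'' of a rank-one projector $\ketbra{0}$ with a complementary rank-one projector $\ketbra{+}$ in the Löwner order is $0$.

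First I would handle the case $\dim(\hilbert) = 2$ directly. Set $a = \ketbra{0}$, $b = \ketbra{1}$, $c = \ketbra{+}$, $d = \ketbra{-}$. By completeness of both the computational and Hadamard bases, $a + b = \I_2 = c + d$, so $a \perp b$ and $c \perp d$ in the effect algebra $\ef[\mathbb{C}^2]$. Now suppose for contradiction that effects $e_{11}, e_{12}, e_{21}, e_{22}$ witness decomposability. From the additive relations and the induced Löwner order one has $e_{11} \sqsubseteq a = \ketbra{0}$ and $e_{11} \sqsubseteq c = \ketbra{+}$.

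The key observation I would then establish is the following rigidity lemma for rank-one projectors: if $e$ is an effect on $\hilbert$ with $e \sqsubseteq \ketbra{\psi}$, then $e = \lambda \ketbra{\psi}$ for some $\lambda \in [0,1]$. This follows by writing $\ketbra{\psi} - e$ as a positive matrix in any orthonormal basis containing $\ket{\psi}$ and observing that its diagonal entries off the $(\psi,\psi)$ slot must be non-positive while $e$ itself is positive, forcing all such diagonal entries of $e$ to vanish; positivity of the $2\times 2$ principal minors then kills the off-diagonal entries too. Applying this lemma twice gives $e_{11} = \lambda \ketbra{0} = \mu \ketbra{+}$, which forces $\lambda = \mu = 0$, hence $e_{11} = 0$. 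Consequently $e_{12} = \ketbra{0}$ and $e_{21} = \ketbra{+}$, and the final constraint $e_{12} + e_{22} = \ketbra{-}$ demands $e_{22} = \ketbra{-} - \ketbra{0}$; a direct computation of this matrix shows it has a negative diagonal entry, so it is not positive, contradicting $e_{22} \in \ef[\mathbb{C}^2]$.

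For $\dim(\hilbert) = d \geq 2$, I would simply embed the $2$-dimensional counterexample into $\hilbert$: pick any two orthonormal vectors $\ket{0}, \ket{1}$ in $\hilbert$, form $\ket{\pm} = (\ket{0} \pm \ket{1})/\sqrt{2}$, and reuse the same four rank-one effects; they still satisfy $a + b = c + d$ (now a projector onto a $2$-dimensional subspace, still bounded by $\I_d$), and the rigidity lemma above already covers arbitrary dimension.

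The main obstacle is not the counterexample itself but the rigidity lemma for effects below a rank-one projector: it is intuitive from the geometry of the Löwner order but requires a short matrix-analytic argument to nail down. Once that is in hand, the decomposition fails along the predicted chain $e_{11} = 0 \Rightarrow e_{22}$ not positive, and the result follows immediately.
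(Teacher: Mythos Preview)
Your proposal is correct and follows essentially the same approach as the paper: the same counterexample $\ketbra{0}+\ketbra{1}=\ketbra{+}+\ketbra{-}$ and the same rank-one rigidity lemma (which the paper cites from \cite{heinosaariMathematicalLanguageQuantum2011} rather than proving directly). The only cosmetic difference is the final contradiction: the paper applies the rigidity lemma once more to conclude $e_{12}=0$ as well (contradicting $e_{11}+e_{12}=\ketbra{0}$), whereas you compute $e_{22}=\ketbra{-}-\ketbra{0}$ and observe it is not positive; both finish the job.
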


A simple example of a case in which kernel and AM-bisimilarities disagree follows.
\begin{example}\label{ex:coalgebre}
	Take $X \xrightarrow{c} Q_C$, with $X = \{x_1,x_2,x_3,x_4\}$ and $c$ defined as below. 
	\begin{gather*}
	c(x_1) = \ketbra{0} \bullet x_3 + \ketbra{1} \bullet x_4 \qquad 
	c(x_2)  = \ketbra{+} \bullet x_3 + \ketbra{-} \bullet x_4 \qquad c(x_3) = c(x_4) = \mathbb{I} \bullet x_3
	\end{gather*}
	We cannot build an $AM$-bisimulation such that $x_1 \sim_{AM} x_2$.
	However, if we take $Z = \{z_1, z_2\}$ and $m$ such that $m(x_1) = m(x_2) = z_1$, $m(x_3) = m(x_4) = z_2$, then
	$(X,c) \xrightarrow{m} (Z,d) \xleftarrow{m} (X,c)$ is a cocongruence for $d$ defined as $d(z_1) = d(z_2) = \I_2 \bullet z_2$, i.e. $x_1 \sim_{k} x_2$. 
\end{example} 

Since kernel and AM-bisimilarities do not agree even on simple cases, we look for some correctness criteria to help us decide which one to use for comparing quantum systems.
For defining them, it is useful to recall the
effect homomorphism $m_\rho (L) = tr(L\rho)$ that for each $\rho \in \dm[\hilbert_C]$ maps a quantum effect to a probability, and the induced natural transformation from $Q_C$ to $\emon[{[0,1]}]$ by~\autoref{thm:morphisnattrans}.
Thanks to~\autoref{thm:nattransisfunc}, a functor $ \qtop\ $ is defined for each $\rho \in \dm[\hilbert_C]$ that maps a $Q_C$-coalgebra in the $\emon[{[0,1]}]$-coalgebra characterizing the behaviour of the Quantum Markov Chain when the input state is $\rho$.
Indeed, The functor $ \qtop\ $ updates the weights of a Quantum Markov Chain by computing the Born rule.
\begin{example}
Take the $Q_C$-coalgebra $(X,c)$ from~\autoref{ex:coalgebre}, then the $\emon[{[0,1]}]$-coalgebra, $\qtop[\ketbra{0}](X, c)$ is $(X, c')$ with $c'(x_1) = 1 \bullet x_3$, $c'(x_2)  = \frac{1}{2} \bullet x_3 + \frac{1}{2} \bullet x_4$ and $c'(x_3)  = c'(x_4)  = 1 \bullet x_3$.
\end{example}
Since their probabilistic behaviour is the only observable property of quantum systems,
our (first) correctness principle for a bisimilarity 
is that two elements of the $Q_C$-coalgebras $(X,c)$ and $(Y,d)$ shall be bisimilar if and only if they are indistinguishable in the probabilistic systems obtained by instantiating their quantum input, i.e.~in the $\emon[{[0,1]}]$-coalgebras $\qtop(X,c)$ and $\qtop(Y,d)$ for any $\rho \in \dm[\hilb_{C}]$.
\begin{desiderata}\label{des:uno}
Let $C$ be a collection of quantum systems, and let $(X, c)$ and $(Y, d)$ be $Q_C$-coalgebras with $x \in X$ and $y \in Y$.
Then, the two following conditions must coincide
\begin{itemize}
\item $x$ and $y$ are bisimilar in $(X, c)$ and $(Y, d)$
\item $x$ and $y$ are bisimilar in $ \qtop(X, c)$ and $ \qtop(Y, d)$ for all $\rho$
\end{itemize}
\end{desiderata}
By~\autoref{thm:nattransisfunc}, all the functors $\qtop\ $ preserve both bisimilarities, hence one side of the implication holds for both of them.
However,\autoref{thm:amkequality} and~\autoref{thm:efnotdec} imply that AM-bisimilarity in $\emon[{[0,1]}]$ is not reflected by $\qtop$ : it does not satisfy the other direction of~\autoref{des:uno}.
In particular, AM-bisimilarity fails to relate processes that are probabilistically indistinguishable when instantiated with all $\rho$. 
\begin{example}
	Take $(X,c)$ from~\autoref{ex:coalgebre}, and recall that $x_1 \sim_{k} x_2$, while $x_1 \not\sim_{AM} x_2$.
	From the discussion above we know that the former implies that $x_1 \sim_k x_2$ in $ \qtop(X, c)$ for all $\rho$.
	Moreover, the latter confirms that AM-bisimilarity does not meet our desiderata.
\end{example}

Regarding the other direction,
we obtain that kernel bisimilarity fully satisfies~\autoref{des:uno} when the quantum effects used in $(X,c)$ and $(Y,d)$ are contained in a finite effect algebra $\mathbb{L} \subsetneq \ef[\hilbert]$.
To prove this result, we build on the following technical lemma, restating our Lemma 4 of~\cite{lics}.
It ensures that, when $\mathbb{L}$ is finite, a single density operator exists that distinguishes every pair of distinct effects in $\mathbb{L}$.
\begin{lemma}\label{thm:injmorph}
	Let $\mathbb{L}$ be a finite set of quantum effects in $\ef[\hilbert]$, then a density operator $\widehat{\rho} \in \dm[\hilbert]$ exists such that for each $L, L' \in \mathbb{L}$, $tr(L \widehat{\rho}) = tr(L' \widehat{\rho})$ if and only if $L = L'$.
\end{lemma}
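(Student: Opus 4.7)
}
The plan is to find $\widehat\rho$ as a density operator that lies outside every ``bad'' hyperplane associated with a pair of distinct effects in $\mathbb{L}$. For each ordered pair $L \neq L'$ in $\mathbb{L}$, let $M_{L,L'} = L - L' \in \cfield^{d\times d}$ and consider the affine slice
\[
H_{L,L'} \;=\; \big\{\, \rho \in \dm[\hilbert] \,\mid\, tr(M_{L,L'}\,\rho) = 0 \,\big\}.
\]
The distinguishing condition $tr(L\widehat\rho) \neq tr(L'\widehat\rho)$ for all $L \neq L'$ is precisely $\widehat\rho \notin \bigcup_{L \neq L'} H_{L,L'}$, and $\mathbb{L}$ being finite means there are only finitely many such $H_{L,L'}$.

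First I would show that each $H_{L,L'}$ is a \emph{proper} subset of $\dm[\hilbert]$. Since $L, L'$ are Hermitian and distinct, $M_{L,L'}$ is a non-zero Hermitian matrix. A standard polarisation argument (or the spectral theorem applied to $M_{L,L'}$) produces a unit vector $\kp$ with $\ev{M_{L,L'}}{\psi} \neq 0$, so the pure state $\ketbra{\psi} \in \dm[\hilbert]$ witnesses $\ketbra{\psi} \notin H_{L,L'}$. Thus the linear functional $\rho \mapsto tr(M_{L,L'}\,\rho)$ is not identically zero on $\dm[\hilbert]$, and $H_{L,L'}$ is the intersection of $\dm[\hilbert]$ with a proper affine hyperplane in the ambient space of trace-one Hermitian matrices.

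Next I would conclude by a dimension/Baire-type argument: $\dm[\hilbert]$ has non-empty relative interior in the real affine space of trace-one Hermitian matrices (the maximally mixed state $\I_d/d$ is an interior point), and each $H_{L,L'}$ has codimension at least one in that affine space. A finite union of proper affine hyperplanes cannot cover a set with non-empty relative interior, so the complement $\dm[\hilbert] \setminus \bigcup_{L\neq L'} H_{L,L'}$ is non-empty (in fact open and dense in the relative interior). Any point $\widehat\rho$ in this complement satisfies the claim.

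The only step that requires real content is showing the functional is non-trivial on $\dm[\hilbert]$, which I expect to handle via the polarisation identity for Hermitian forms; everything else is a routine finite-union / interior-point observation. A concrete alternative, avoiding the existence argument, would be to take $\widehat\rho = (1-\varepsilon)\,\I_d/d + \varepsilon\,\sigma$ for a generic $\sigma$ and a sufficiently small $\varepsilon > 0$, and verify that for each of the finitely many $M_{L,L'}$ the map $\varepsilon \mapsto tr(M_{L,L'}\,\widehat\rho)$ is a non-constant affine function of $\varepsilon$, so all its finitely many roots can be simultaneously avoided.
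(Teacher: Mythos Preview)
Your argument is correct. For each pair $L \neq L'$ the difference $M = L - L'$ is a nonzero Hermitian matrix, hence has a nonzero real eigenvalue, and the corresponding eigenvector gives a pure state on which the functional $\rho \mapsto tr(M\rho)$ is nonzero; each $H_{L,L'}$ is then contained in a proper affine hyperplane of the real affine space of trace-one Hermitian matrices (or is empty, when $M$ is a nonzero multiple of $\I_d$), and $\dm[\hilbert]$, having $\I_d/d$ in its relative interior, cannot be covered by finitely many such hyperplanes.

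There is, however, no proof in the paper to compare against: the lemma is stated as a restatement of Lemma~4 of~\cite{lics} and is used as a black box. Your hyperplane-avoidance argument is the standard route to this kind of statement; whether it matches the argument in~\cite{lics} cannot be determined from the present paper.

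One small remark on your ``concrete alternative'': the map $\varepsilon \mapsto tr(M_{L,L'}\,\widehat\rho)$ is non-constant only when $tr(M_{L,L'}\sigma) \neq tr(M_{L,L'})/d$, which can fail (e.g.\ for $\sigma = \I_d/d$, or whenever $\sigma$ lies on the relevant hyperplane). Ensuring $\sigma$ is ``generic'' already requires the same finite-hyperplane avoidance you used in the main argument, so the alternative is not genuinely more explicit.
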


This means that among all the effect algebra homomorphisms $m_\rho$, there is at least an injective one $m_{\widehat{\rho}}$, 
which yields a natural transformation with injective components by~\autoref{thm:morphisnattrans}.
Thus, by \autoref{thm:nattransisfunc} the resulting functor $ \qtop[\widehat{\rho}]\ $ reflects kernel bisimilarity.

\begin{restatable}{theorem}{quantumkernelreflected}
	Let $(X, c)$ and $(Y, d)$ be two $\emon[\mathbb{L}]$-coalgebras, and let $x \in X$ and $y \in Y$.
	If $x$ and $y$ are kernel bisimilar in $ \qtop(X, c)$ and $ \qtop(Y, d)$ for all $\rho$, then $x$ and $y$ are kernel bisimilar in $(X, c)$ and $(Y, d)$.
\end{restatable}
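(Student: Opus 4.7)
The plan is to exhibit a single “witness” density operator $\widehat\rho$ whose induced functor $\qtop[\widehat\rho]$ not only preserves but also reflects kernel bisimilarity. The hypothesis then supplies the bisimilarity in $\qtop[\widehat\rho](X,c)$ and $\qtop[\widehat\rho](Y,d)$, and reflection pulls it back to $(X,c)$ and $(Y,d)$.

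First, I would apply \autoref{thm:injmorph} to the finite effect algebra $\mathbb L$, obtaining a density operator $\widehat\rho \in \dm[\hilbert]$ such that the map $L \mapsto tr(L\widehat\rho)$ separates the elements of $\mathbb L$. Restricting $m_{\widehat\rho}$ to $\mathbb L$ therefore yields an effect algebra homomorphism $m_{\widehat\rho}\!\!\restriction_{\mathbb L} : \mathbb L \to [0,1]$ that is \emph{injective}. (Here I am using that $\mathbb L$ is a sub–effect algebra of $\ef[\hilbert]$, so the restriction is still an effect morphism.)

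Next, I would invoke \autoref{thm:morphisnattrans}: the injective effect morphism $m_{\widehat\rho}\!\!\restriction_{\mathbb L}$ produces a natural transformation $m_{\widehat\rho} \circ \blank : \emon[\mathbb L] \Rightarrow \emon[{[0,1]}]$ whose components are injective functions. Feeding this natural transformation into the theorem on reflection (\autoref{thm:nattransisfunc} together with the reflection statement just above), the induced functor on coalgebras reflects kernel bisimilarity. By construction, this induced functor is exactly $\qtop[\widehat\rho] : \coalgcat[\emon[\mathbb L]] \to \coalgcat[\emon[{[0,1]}]]$, sending $(X,c)$ to $(X, (m_{\widehat\rho} \circ \blank) \circ c)$.

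To finish, I apply the hypothesis at the specific choice $\rho = \widehat\rho$: $x$ and $y$ are kernel bisimilar in $\qtop[\widehat\rho](X,c)$ and $\qtop[\widehat\rho](Y,d)$. Since $\qtop[\widehat\rho]$ reflects kernel bisimilarity, it follows that $x \sim_k y$ in $(X,c)$ and $(Y,d)$, as required. The only real work in the proof is the construction and injectivity of $m_{\widehat\rho}\!\!\restriction_{\mathbb L}$, which is already packaged in \autoref{thm:injmorph}; all remaining steps are an essentially mechanical combination of previously established categorical facts. The main subtlety to be careful about is the scope of finiteness: the argument crucially uses that $\mathbb L$ is finite so that a single $\widehat\rho$ suffices, which is precisely why the result fails when coalgebras range over all of $\ef[\hilbert]$.
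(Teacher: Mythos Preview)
Your proposal is correct and follows essentially the same route as the paper's proof: pick $\widehat\rho$ from \autoref{thm:injmorph} so that $m_{\widehat\rho}$ is injective on $\mathbb L$, use \autoref{thm:morphisnattrans} to get a natural transformation with injective components, and then invoke the reflection result for such transformations to pull kernel bisimilarity at $\widehat\rho$ back to $(X,c)$ and $(Y,d)$. The only cosmetic difference is that the paper cites the reflection step via its dedicated theorem rather than bundling it with \autoref{thm:nattransisfunc}.
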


As previously stated, this result is based on~\autoref{thm:injmorph}, which is proved only on finite effect algebras.
We expect that this result can be extended at least to the countably infinite case.
However, the current model is still capable of representing and comparing the behaviour of systems performing a finite set of different destructive measurements, which is often the case in quantum protocols~\cite{nurhadiQuantumKeyDistribution2018, bennettQuantumCryptographyPublic2014, longQuantumSecureDirect2007, gaoQuantumPrivateQuery2019} and in the quantum process calculi literature~\cite{ceragioliQuantumBisimilarityBarbs2024,kubotaApplicationProcessCalculus2012,fengAutomaticVerificationQuantum2015a,dengBisimulationsProbabilisticQuantum2018}.

We investigate now a different correctness principle, called locally parameterized probabilistic bisimilarity (lpp) in~\cite{lics}.
In the previous approach, two quantum systems are equated if they behave the same when given any possible quantum state.
Nevertheless, the input is given globally for the whole Markov Chain.
This means we assume that an adversary trying to disprove equivalence can only choose the state once at the beginning.
In lpp instead, the adversary can give a different quantum state at each step of the two Markov Chains.
In the following, we model this feature coalgebraically.
Recall that $L \mapsto \lambda \rho. tr(L \rho): {\ef[\hilbert]} \to \mathbf{Conv}(\dm[\hilbert], [0,1])$ is an isomorphism from quantum effects to convex parameterized probabilities.
Indeed, $\mathbf{Conv}(\dm[\hilbert_C], [0,1])$ is itself an effect algebra, that we name $f_C$ for convenience, and $L \mapsto \lambda \rho. tr(L \rho)$ is an effect algebra isomorphism.
We can therefore define $\emon[f_C]$ and a natural transformation $\alpha_{\mathit{lpp}}: Q_C \to \emon[f_C]$ transforming a quantum distribution over $X$ into a distribution that associates each element $x \in X$ with a function that given a quantum input returns a probability.
%
Intuitively, while $ \qtop\ $ applies a unique initial input density operator that transforms a Quantum Markov Chain into a probabilistic system,
$\alpha_{lpp} \circ \blank$ allow us to change $\rho$ at every step of our bisimilarity.
\begin{desiderata}\label{des:due}
Let $(X, c)$ and $(Y, d)$ be $Q_C$-coalgebras, and let $x \in X$ and $y \in Y$.
Then, bisimilarity of $x$ and $y$ in $(X, c)$ and $(Y, d)$, i.e. when effects are used as weights, must hold if and only if $x$ and $y$ are bisimilar in $(\alpha_{\mathit{lpp}} \circ \blank) (X, c)$ and $(\alpha_{\mathit{lpp}} \circ \blank) (Y, d)$, i.e. when convex functions are used as weights in place of their isomorphic effects.
\end{desiderata}
Kernel bisimilarity over $Q_C$-coalgebras exactly coincides with this notion of locally parameterized probabilistic bisimilarity.
\begin{restatable}{theorem}{ismorphism}
	Let $(X, c)$ and $(Y, d)$ be $Q_C$-coalgebras, and let $x \in X$ and $y \in Y$.
	Then
	$x \sim_k y$ in $(X, c)$ and $(Y, d)$ holds if and only if $x \sim_k y$ holds in $(\alpha_{\mathit{lpp}} \circ \blank) (X, c)$ and $(\alpha_{\mathit{lpp}} \circ \blank) (Y, d)$.
\end{restatable}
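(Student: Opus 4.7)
The plan is to recognise this as a direct application of the machinery already developed for relating coalgebras via natural transformations, using the fact that $\alpha_{\mathit{lpp}}$ arises from an \emph{isomorphism} of effect algebras rather than just a homomorphism. First I would recall that by \autoref{thm:effectsiso} the map $L \mapsto \lambda \rho\ldotp tr(L\rho)$ is an isomorphism $\ef[\hilbert_C] \to f_C$ in $\eacat$ (both sides are in fact the same effect algebra up to renaming), so in particular it is injective.

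Next, I would invoke \autoref{thm:morphisnattrans}: every effect morphism induces a natural transformation between the associated distribution functors, and when the morphism is injective the components of the natural transformation are injective. Applying this to $L \mapsto \lambda \rho\ldotp tr(L\rho)$ yields the natural transformation $\alpha_{\mathit{lpp}} : Q_C \Rightarrow \emon[f_C]$ with injective components $(\alpha_{\mathit{lpp}})_X$ for every set $X$.

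Now the proof splits into the two directions of the biconditional. For the forward direction ($\Rightarrow$), \autoref{thm:nattransisfunc} tells us that $\alpha_{\mathit{lpp}} \circ \blank$ is a functor $\coalgcat[Q_C] \to \coalgcat[{\emon[f_C]}]$, and this functor preserves bisimilarities; in particular, kernel bisimilarity is preserved, so $x \sim_k y$ in $(X,c),(Y,d)$ implies $x \sim_k y$ in $(\alpha_{\mathit{lpp}} \circ \blank)(X,c),(\alpha_{\mathit{lpp}} \circ \blank)(Y,d)$. For the reverse direction ($\Leftarrow$), I would invoke \autoref{thm:alphainducef} (the injectivity-reflection result): since every component of $\alpha_{\mathit{lpp}}$ is injective, the induced functor reflects kernel bisimilarity, giving the converse implication.

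I do not anticipate a genuine obstacle here: the main content of the theorem is already packaged in \autoref{thm:effectsiso}, \autoref{thm:morphisnattrans}, \autoref{thm:nattransisfunc}, and the injectivity-reflection theorem. The only point worth double-checking in the write-up is the chain of identifications between $\ef[\hilbert_C]$ and $f_C = \mathbf{Conv}(\dm[\hilbert_C],[0,1])$ as an effect algebra (with pointwise sum and pointwise complement), so that $L \mapsto \lambda \rho\ldotp tr(L\rho)$ really is a morphism in $\eacat$ and not merely a bijection of underlying sets; once this is observed, the rest of the argument is a one-line invocation of the established lemmas.
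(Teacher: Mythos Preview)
Your proposal is correct and follows essentially the same route as the paper: observe that $L \mapsto \lambda\rho.\,tr(L\rho)$ is an injective effect-algebra morphism (indeed an isomorphism by \autoref{thm:effectsiso}), apply \autoref{thm:morphisnattrans} to get a natural transformation with injective components, then use \autoref{thm:nattransisfunc} for preservation and the injectivity-reflection theorem for the converse. If anything, your write-up is slightly more careful than the paper's own sketch in citing \autoref{thm:effectsiso} for injectivity and explicitly separating the two directions.
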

An interesting corollary is that, when coalgebras are built over a finite quantum effect algebra $\mathbb{L}$, the stronger adversary that can choose a different quantum input at each step is not capable of discriminating more than the weaker one that only chose once at the beginning.

\section{Effect Labelled Transition Systems}\label{sec:operators}

We can now define a generalization of probabilistic transition systems to effect distributions, for which we extend the properties of the previous section and introduce both a generic parallel composition and a specifically quantum operator of partial evaluation.
\begin{definition}[$\ealg$-Labelled Transition System]
	Given an effect algebra $\ealg$, a \emph{$\ealg$-labelled transition system} ($\ealg$LTS) is a coalgebra $X \xrightarrow{c} \mathcal{P}(\emon[\ealg]X)^L$, where $X$ is a set of states, $L$ is a fixed set of labels, $\powset$ is the finitary powerset endofunctor and $\emon$ is the $\ealg$-distribution endofunctor on $\setcat$.
\end{definition}
As is typical for process calculi, we will assume that there exist a special symbol $\tau \in L$ and an involutive unary operation $\overline{\blank}$ for all labels different from $\tau$. Intuitively, $\tau$ represents a silent, invisible action, and whenever $\mu$ represents a visible action (e.g. outputting a value), $\overline{\mu}$ represents its "dual" action (e.g. receiving that value).
Moreover, we will write $x \xrightarrow{\mu} \Delta$ for $\Delta \in c(x)(\mu)$.

In the following, we deal with $\emon[{[0,1]}]$LTSs and $\qmon[C]$LTSs (called pLTSs and qLTSs hereafter), i.e. LTSs over distributions of probabilities and of quantum effects in some $\hilbert_{C}$ respectively.
All the results on $\emon$-coalgebras of the previous section 
can be easily extended to more complex coalgebras through \emph{whiskering}.
For example, we can whisker the natural transformation $\downarrow_\rho : \qmon[C] \to \emon[{[0,1]}]$
and the finite powerset functor $\mathcal{P}$, obtaining a natural transformation $\powset\!\!\qtop\ : \powset\qmon[C] \to \powset\emon[{[0,1]}]$ and recovering the previous results for systems that transition non-deterministically to effect distributions.
\begin{corollary}
Let $(X, c)$ and $(Y,d)$ be qLTSs, with $x \in X$ and $y \in Y$.
	\begin{itemize}
		\item $x \sim_{AM} y \implies x \sim_{k} y$ in general for qLTSs;
		\item $x \sim_{k} y \centernot\implies x \sim_{AM} y$ if the dimension of the Hilbert space is at least two;
		\item for each $\rho$ over the adequate Hilbert space, there is a functor transforming $c$ and $d$ into the pLTSs $c_\rho$ and $d_\rho$ representing the probabilistic behaviour of the system when the quantum state is $\rho$;
		\item if $x$ and $y$ are kernel bisimilar in $c$ and $d$, then they are bisimilar also in $c_\rho$ and $d_\rho$ for every $\rho$;
		\item if $c$ and $d$ use a finite effect algebra of weights, and $\forall \rho. x \sim_{k} y$ in $c_\rho$ and $d_\rho$, then $x \sim_{k} y$ in $c$ and $d$;
		\item $x$ and $y$ are kernel bisimilar in $c$ and $d$ if and only if they are bisimilar in the $\emon[f_C]$LTSs where effects are considered as convex functions from density operators to probabilities, i.e. when the adversary can choose a different density operator at each step of the bisimilarity.
	\end{itemize}
\end{corollary}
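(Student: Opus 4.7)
The plan is to use \emph{whiskering} to reduce each bullet to a statement already proved for $Q_C$-coalgebras. A qLTS is a coalgebra for the composite functor $F_{C} = \mathcal{P}(Q_C(-))^{L}$, built from $Q_C$ by postcomposing with the finite powerset functor and then taking the $L$-fold exponent. Every natural transformation $\alpha : Q_C \Rightarrow G$ lifts via this composition to a natural transformation $\mathcal{P}(\alpha)^{L} : F_C \Rightarrow \mathcal{P}(G(-))^{L}$, and, crucially, both $\mathcal{P}(-)$ and $(-)^{L}$ preserve injectivity of maps, so the whiskered transformation will have injective components whenever $\alpha$ does.

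For bullets (1) and (2): the implication $\sim_{AM} \Rightarrow \sim_{k}$ holds for any $\setcat$-endofunctor, obtained by pushing the coalgebra structure on an AM-bisimulation forward along its projections to form a cocongruence, so it transfers to qLTSs. For the failure of the converse when $\dim(\hilbert_C) \geq 2$, I would repackage \autoref{ex:coalgebre} as a qLTS by picking a label $a \in L$ and defining $\widetilde{c}(x)(a) = \{\,c(x)\,\}$ and $\widetilde{c}(x)(\mu) = \emptyset$ for $\mu \neq a$; AM- and kernel-bisimulations of the wrapped system coincide with those of the original $Q_C$-coalgebra, so the same two states witness the separation.

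For bullets (3), (4), and (5): by \autoref{thm:morphisnattrans}, each $\rho \in \dm[\hilbert_C]$ yields the effect homomorphism $m_\rho(L) = tr(L\rho)$ and hence a natural transformation $m_\rho \circ \blank : Q_C \Rightarrow \emon[{[0,1]}]$. Whiskering as above and applying \autoref{thm:nattransisfunc} will produce the functor $\qtop$ from qLTSs to pLTSs and guarantee that it preserves kernel bisimilarity, proving (3) and (4). For (5), when the coalgebras use a finite effect algebra, \autoref{thm:injmorph} supplies an injective $m_{\widehat{\rho}}$; the induced natural transformation has injective components, and injectivity survives whiskering, so the reflection theorem for injective natural transformations applies and the functor reflects kernel bisimilarity.

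For bullet (6): by \autoref{thm:effectsiso}, $L \mapsto \lambda\rho. tr(L\rho)$ is an effect-algebra \emph{isomorphism} between $\ef[\hilbert_C]$ and $f_C$, so $\alpha_{\mathit{lpp}}$ is a natural isomorphism. Whiskering preserves isomorphisms, so the induced functor on $F_C$-coalgebras is itself an isomorphism of coalgebra categories, which both preserves and reflects kernel bisimilarity, yielding the biconditional. The main technical step, common to all bullets, is the explicit check that $\mathcal{P}(-)$ and $(-)^{L}$ preserve injectivity and interact correctly with natural transformations; this is routine but has to be made explicit, since the entire argument rests on it.
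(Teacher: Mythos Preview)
Your proposal is correct and follows exactly the approach the paper indicates: the paper does not give a detailed proof of this corollary but simply states that the results on $\emon$-coalgebras extend to qLTSs ``through whiskering,'' and you have spelled out precisely how that whiskering works for each bullet, including the key observation that $\mathcal{P}$ and $(-)^L$ preserve injectivity so that the reflection results survive. Your treatment of bullet~(2) via a direct embedding of \autoref{ex:coalgebre} is slightly more explicit than what the paper sketches, but it is the natural way to make the claim concrete.
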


In the rest of the section we describe some operators on $\ealg$LTSs.
When dealing with coalgebraic LTSs, it is typical to define operators like nondeterministic sum, restriction or parallel composition as acting on \textit{processes}, i.e. states of the final coalgebra of the functor under consideration~\cite{Jacobs2016}.
We instead define operators directly between coalgebras as in~\cite{ruttenUniversalCoalgebra2000}, allowing us to compose coalgebras of different functors.
We postpone to future work how these ``macroscopic'' operators relates to the ``microscopic'' ones in the final coalgebra, following the approach of~\cite{Hasuo11}.
We present two operators: the first generalizes parallel composition to all $\ealg$LTSs, the other is specific to qLTSs and specifies how to perform partial evaluation of the input quantum state.
We first introduce the notion of \emph{extensible} graded monad.


In $\cite{katsumata2014effect}$, the author proposes a monad graded on a preordered monoid, intended to model the composition of computational side effects and their scope.
We replicate this notion in our PCM setting, employing the fact that each PCM automatically carries a preorder structure.

\begin{definition}[Extensible Graded Monad]
	An $M$-graded monad is \emph{extensible} if for all $m \preceq n \in M$, a natural transformation $\xi_{m \preceq n}: T_m \to T_n$ called \emph{extension} exists, with
	$ \xi_{m \preceq m} = \mathit{Id}$ and $\xi_{n \preceq o} \circ \xi_{m \preceq n} = \xi_{m \preceq o}
	$.
\end{definition}

Extensible monads give us a way to canonically extend a computation to a greater grade, which we will need to define  parallel composition.  All non graded monad are trivially extensible  with the identity transformation.
We provide an extension for monads built on
effect monoids graded on an effect algebra.

\begin{restatable}{theorem}{effectgraded}
	Suppose $\{\ealg_m\}$ is an $M$-graded effect monoid and $M$ an effect algebra, with $-$ its induced difference operator. There is an extensible graded monad $\{\emon[\ealg_m]\}$ with unit $\eta: Id \to \emon[\ealg_0]$ and multiplication $\mu_{m,n}: \emon[\ealg_m]\emon[\ealg_n] \to \emon[\ealg_{m+n}]$ defined as in \autoref{thm:gmofgem} and extension $\xi_{m \preceq n}$
	\[
		\xi_{m \preceq n} (\sum\nolimits_i e_i \bullet x_i) = \sum\nolimits_i \nabla_{m,n-m}(e_i, 1_{\ealg_{n-m}}) \bullet x_i.
	\]
\end{restatable}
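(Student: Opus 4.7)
The plan is to reuse the graded monad structure already established in \autoref{thm:gmofgem} and verify the four remaining conditions for the proposed $\xi$: (a) $\xi_{m \preceq n}$ is a well-defined function $\emon[\ealg_m]X \to \emon[\ealg_n]X$, (b) naturality in $X$, (c) the identity law $\xi_{m \preceq m} = \mathit{Id}$, and (d) the composition law $\xi_{n \preceq o} \circ \xi_{m \preceq n} = \xi_{m \preceq o}$.

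For well-definedness, the key observation is that, by the bijection between morphisms out of $\ealg_m \otimes \ealg_{n-m}$ and bihomomorphisms out of $\ealg_m \times \ealg_{n-m}$, fixing the second argument of $\nabla_{m,n-m}$ at $1_{\ealg_{n-m}}$ yields an effect homomorphism $\phi: \ealg_m \to \ealg_n$. Since effect morphisms preserve orthogonality, sums, and the top element, $\phi$ sends subdistributions to subdistributions: the support stays finite and $\sum_i \phi(e_i) = \phi(\sum_i e_i) \sqsubseteq \phi(1_{\ealg_m}) = 1_{\ealg_n}$. Naturality of $\xi_{m \preceq n}$ then reduces, after unfolding $\emon[\ealg_m] f$, to $\phi$ preserving finite sums, which it does.

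The identity law is immediate: for $n = m$ we have $n - m = 0$, and the right-unit diagram of the graded effect monoid together with $\eta(1) = 1_{\ealg_0}$ forces $\nabla_{m,0}(e, 1_{\ealg_0}) = e$, so $\xi_{m \preceq m}$ is the identity.

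The main technical step is the composition law, and it will be the hardest to state cleanly. Given $m \preceq n \preceq o$ in the effect algebra $M$, the associativity diagram of $\nabla$ (applied at grades $m, n-m, o-n$, whose pairwise orthogonalities are inherited from the defined sum $m + (n-m) + (o-n) = o$ in the PCM $M$) yields
\[
\nabla_{n,o-n}\bigl(\nabla_{m,n-m}(e,1_{\ealg_{n-m}}),\, 1_{\ealg_{o-n}}\bigr) \;=\; \nabla_{m,(n-m)+(o-n)}\bigl(e,\, \nabla_{n-m,o-n}(1_{\ealg_{n-m}},1_{\ealg_{o-n}})\bigr).
\]
Two algebraic facts close the argument: in the effect algebra $M$, uniqueness of the difference gives $(n-m) + (o-n) = o - m$; and since $\nabla_{n-m,o-n}$ is an effect homomorphism it preserves the top element, so $\nabla_{n-m,o-n}(1,1) = 1_{\ealg_{o-m}}$. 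Substituting collapses the right-hand side to $\nabla_{m,o-m}(e, 1_{\ealg_{o-m}})$, and summing over the support recovers $\xi_{m \preceq o}$. The delicate point is threading the grade-level arithmetic of $M$ through the associator of $\nabla$; once that bookkeeping is in place, everything else follows from the defining properties of effect morphisms.
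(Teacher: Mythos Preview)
Your proof is correct and follows essentially the same approach as the paper's: the identity law comes from unitality of $\nabla$, the composition law from associativity of $\nabla$ together with $\nabla(1,1)=1$, and you have simply unpacked these in full detail (including the well-definedness and naturality checks that the paper's sketch leaves implicit). The only minor wording issue is calling $\nabla_{n-m,o-n}$ ``an effect homomorphism'' when it is more precisely an effect morphism out of a tensor product (equivalently, a bihomomorphism), but the conclusion $\nabla(1,1)=1$ is exactly what the paper invokes and your use of it is sound.
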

Indeed, $\qmon$ is extensible, since the partial monoid of quantum systems is an effect algebra.
\begin{restatable}{lemma}{extqmon}
	$\mathcal{S} = \langle \mathcal{P}(Sys), \emptyset, \uplus, Sys \setminus \blank \rangle$ is an effect algebra, with $\setminus$ the usual set difference.
\end{restatable}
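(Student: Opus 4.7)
The plan is to take the PCM structure of $\mathcal{S}$ for granted (it was verified when the PCM of quantum systems was introduced) and to check the two extra axioms that turn a PCM into an effect algebra: existence and uniqueness of an orthocomplement, and the fact that $e \perp 1$ forces $e = 0$. With $0 = \emptyset$ and $1 = 0' = Sys \setminus \emptyset = Sys$, the proposed involution $C' = Sys \setminus C$ is clearly a well-defined function $\mathcal{P}(Sys) \to \mathcal{P}(Sys)$, so the work reduces to two short set-theoretic verifications.

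For existence and uniqueness of the orthocomplement, first note that for any $C \in \mathcal{P}(Sys)$ the sets $C$ and $C' = Sys \setminus C$ are disjoint, so $C \uplus C'$ is defined by definition of the partial disjoint union and equals $C \cup (Sys \setminus C) = Sys = 1$. For uniqueness, suppose $D \in \mathcal{P}(Sys)$ satisfies $C \uplus D = Sys$. Then $D$ is disjoint from $C$ (otherwise $C \uplus D$ would be undefined) and $C \cup D = Sys$; together these force $D = Sys \setminus C = C'$.

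For the zero law, suppose $C \perp 1$, i.e.\ $C \uplus Sys$ is defined. By the definition of $\uplus$ this means $C \cap Sys = \emptyset$, and since $C \subseteq Sys$ we conclude $C = \emptyset = 0$.

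I do not foresee any genuine obstacle: both checks are one line each once the candidate $1$ and $C'$ are identified. The only thing worth emphasising in the write-up is that the uniqueness clause really does use the partiality of $\uplus$, since it is the disjointness constraint built into orthogonality that pins $D$ down to the complement.
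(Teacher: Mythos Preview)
Your proposal is correct and follows essentially the same approach as the paper: verify that $Sys \setminus C$ is the unique orthocomplement and that $C \uplus Sys$ being defined forces $C = \emptyset$. Your write-up is simply a slightly more expanded version of the paper's two-line sketch.
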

Intuitively, when dealing with a computation consuming the set of qubits $C$, we can always extend it to a computation on $C' \supseteq C$ by performing the identity measurement on the extra qubits.

We now focus on the parallel composition of $\ealg$LTSs, in which two subsystems can move on their own or exchange messages.
In the quantum case, we will be able to compose qLTSs only when their grades are summable, ensuring that they do not perform measurements on the same quantum systems.

In the process-calculi literature, there are different notions of parallel composition, corresponding to different “synchronization styles”: à la CCS, CSP or ACP.
Moreover,
different extensions have been considered
from the original non-deterministic setting to the probabilistic \cite{hennessyExploringProbabilisticBisimulations2012,bartelsHierarchyProbabilisticSystem2004} and quantum \cite{lics} one.
We introduce a \textit{generic parallel operator}, which is parametric both with respect to the synchronization style and to the commutative graded monad chosen for the weights ($Id$ for LTSs,
$\emon[{[0,1]}]$ for pLTSs, $\qmon[C]$ for qLTSs).
Our parallel operator is defined in two steps: the first specifies how to compose the "non-deterministic structure" of the systems (the $\powset^L$ functor), the second how to combine a couple of distributions into a distribution of couples.
For the first step, we focus on a CCS-style interleaving composition, formalized as a binary \textit{synchronization operator} $\blank | \blank$ on transition functions.
\begin{definition}[CCS-Style Synchronization]
	Consider the extensible graded monad $\{T_m\}$.
	Given two coalgebras $X \xrightarrow{c} \PL{T_m X}$ and $Y \xrightarrow{d} \PL{T_n Y}$ we define $c | d : X \times Y  \to \PL{T_m X \times T_n Y}$, the "CCS-style" synchronization of $c$ and $d$, as
	\[
		\infer{s \parallel t \xrightarrow{\mu} \langle \Delta,\ \xi_{0 \preceq n} (\eta(t)) \rangle}{s \xrightarrow{\mu} \Delta}
		\qquad
		\infer{s \parallel t \xrightarrow{\mu} \langle \xi_{0 \preceq m} (\eta(s)) ,\ \Theta \rangle}{t \xrightarrow{\mu} \Theta}
		\qquad
		\infer{s \parallel t \xrightarrow{\tau} \langle \Delta,\ \Theta\rangle}
			{s \xrightarrow{\mu} \Delta & t \xrightarrow{\mu} \Theta}
	\]
	where $\times$ is the cartesian product on $\setcat$, $s \parallel t$ is an element of $X \times Y$, $\Delta$ (resp. $\Theta$) is an element of $T_m X$ (resp. $T_n Y$), and $s \xrightarrow{\mu} \Delta$ is the usual SOS-style notation for $\Delta \in c(s)(\mu)$.
\end{definition}

To model CCS-style synchronization we represent the “idle behaviour” of a state $s$ by taking the monad unit and extending it to the desired grade, i.e. $\xi_{0 \preceq m} (\eta(s))$.
In pLTSs this coincides with the unit $1 \bullet s$.
In qLTSs we get the point distribution "scaled" for the given dimension $\mathbb{I}_{d} \bullet s$.
For pure non-deterministic systems, where $T$ is the identity monad, we get the usual parallel composition of CCS.

For the second step, we define a parallel operator $\blank\parallel\blank$ on {$\ealg$LTSs}, in the form of a functor $\coalgcat[\PL{T_m}] \times \coalgcat[\PL{T_n}] \to \coalgcat[\PL{T_{m+n}}]$.
The definition is 
parametric with respect to a natural transformation $\alpha : T_mX \times T_nY \to T_{m+n}(X \times Y)$ which specifies how to combine two distributions.
For classical, probabilistic and quantum systems, we have commutative monads, which have a canonical transformation $\alpha$.
\begin{restatable}{theorem}{parallelfunctor}
	Given a transformation $\alpha : T_mX \times T_nY \to T_{m+n}(X \times Y)$, we can construct a functor $\parallel: \
		\coalgcat[\PL{T_m}] \times \coalgcat[\PL{T_n}] \to \coalgcat[\PL{T_{m+n}}]$ defined by
	\[ (X, c) \parallel (Y, d) = (X \times Y, \PL{\alpha} \circ (c | d)) \qquad f \parallel g = f \times g
	\]
\end{restatable}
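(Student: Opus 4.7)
The plan is to verify three conditions in turn: that the object assignment yields a valid $\PL{T_{m+n}}$-coalgebra, that the morphism assignment $f \times g$ satisfies the coalgebra-homomorphism condition, and that identities and composition are preserved. Throughout, I will use the naturality of $\alpha$ as stated in the preceding paragraph. The object part is a type-check: $c | d$ has codomain $\PL{T_m X \times T_n Y}$, and post-composing with $\PL{\alpha_{X,Y}}$ (i.e.\ applying $\powset(\alpha_{X,Y})$ componentwise at each label) gives a map $X \times Y \to \PL{T_{m+n}(X \times Y)}$ of the required shape.

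For the morphism part, fix coalgebra homomorphisms $f : (X, c) \to (X', c')$ and $g : (Y, d) \to (Y', d')$. Writing out the required square for $f \times g$, pulling $\powset(-)$ through the composition and using naturality of $\alpha$ to swap $T_{m+n}(f \times g)$ past $\alpha_{X,Y}$ reduces the goal to
\[
\powset(T_m f \times T_n g) \circ (c | d) = (c' | d') \circ (f \times g).
\]
I would verify this at each pair $(s, t)$ and each label $\mu$ by inspecting the three SOS rules of $|$. A left-step transition $\langle \Delta, \xi_{0 \preceq n}(\eta(t)) \rangle$ in $(c|d)(s,t)(\mu)$ arising from $s \xrightarrow{\mu} \Delta$ is sent to $\langle T_m f(\Delta), T_n g(\xi_{0 \preceq n}(\eta(t))) \rangle$; the homomorphism property of $f$ gives $T_m f(\Delta) \in c'(f(s))(\mu)$, while naturality of $\eta$ and of $\xi_{0 \preceq n}$ rewrites the second component as $\xi_{0 \preceq n}(\eta(g(t)))$, so the image is exactly a left-step transition of $(c' | d')$ from $(f(s), g(t))$. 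The right-step case is symmetric, and the synchronization case $\langle \Delta, \Theta \rangle$ maps to $\langle T_m f(\Delta), T_n g(\Theta) \rangle$ and is handled by the two homomorphism properties simultaneously. Reversing the argument shows that every transition of $(c'|d')$ from $(f(s), g(t))$ arises in this way, yielding set-level equality.

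Preservation of identities and composition is immediate from the functoriality of the cartesian product on $\setcat$: $\mathit{id}_X \times \mathit{id}_Y = \mathit{id}_{X \times Y}$ and $(f' \times g') \circ (f \times g) = (f' \circ f) \times (g' \circ g)$. The main obstacle is the morphism step, specifically the matching of the three SOS rules under $T_m f \times T_n g$; it decomposes cleanly because each rule's ``idle'' components are built from $\eta$ and $\xi$, whose naturality absorbs the action of $f$ and $g$, while the non-idle components are governed precisely by the coalgebra-homomorphism hypotheses on $f$ and $g$.
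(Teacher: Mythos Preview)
Your proposal is correct and follows essentially the same route as the paper: reduce the homomorphism square for $f \times g$ via naturality of $\alpha$ to the equation $\PL{T_m f \times T_n g} \circ (c \mid d) = (c' \mid d') \circ (f \times g)$, and then discharge that by a case analysis on the three SOS rules using the homomorphism hypotheses on $f$ and $g$. The paper factors this last step out as a separate lemma (\autoref{lem:transPar}) while you do it inline, and you are slightly more explicit than the paper in invoking naturality of $\eta$ and $\xi$ to handle the idle components; neither difference is substantive.
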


We now consider a special operator acting over qLTS, i.e. the \textit{partial evaluation}.
\begin{definition}[Partial Evaluation of Quantum Effects]
	Consider a quantum effect $L \in \ef[\hilb_{C}]$.
	For all quantum state $\rho \in \dm[\hilb_{C'}]$ with $C' \subseteq C$, we define the \textit{partial evaluation} of $L$ with state $\rho$ as $tr_{C'}(L(\rho \boxtimes \mathbb{I}))$,
	where $\mathbb{I}$ is the identity operator on $\hilb_{C \setminus C'}$ and $\boxtimes$ is the product of \autoref{thm:efMonQuEf}.
\end{definition}
Our previously defined total evaluation $\downarrow_\rho$ is of course a specific case of partial evaluation, when $C' = C$.
For all $\rho$, partial evaluation $tr_{C'}(\blank(\rho \boxtimes \mathbb{I}))$ is an effect morphism, and thus yields a functor from $\coalgcat[{\qmon[C]}]$ to $\coalgcat[{\qmon[C \setminus C']}]$, and can be extended also to qLTSs via whiskering.
\begin{definition}[Partial Evaluation of qLTS]
		Let $C$, $C'$ be collections of quantum systems such that $C' \subseteq C$.
		For each $\rho \in \dm[\hilb_{C'}]$ we define the \textit{partial evaluation} of $\coalgcat[{\PL{\qmon[C]}}]$ with input $\rho$ as the functor
		$\peval{}{\rho} : \coalgcat[{\PL{\qmon[C]}}] \to \coalgcat[{\PL{\qmon[{C \setminus C'}]}}]$ induced by the effect morphism $L \mapsto tr_{C'}(L(\rho \boxtimes \mathbb{I}))$.
\end{definition}

Thanks to functoriality, parallel composition and partial evaluation preserve bisimilarity.
\begin{restatable}{theorem}{congruence}
	If $s \sim_k t$ then, for all $\rho$, $\peval{s}{\rho} \sim \peval{t}{\rho}$.
	Moreover, if $s' \sim t'$ then $s \parallel s' \sim t \parallel t'$.
\end{restatable}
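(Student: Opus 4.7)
For both statements the plan is to invoke the functoriality already established for the two operators, together with the general principle that any functor $F: \coalgcat[G] \to \coalgcat[H]$ preserves (kernel) bisimilarity: it sends cocongruences to cocongruences, because coalgebra homomorphisms are closed under $F$ by definition. So the real work has already been done when proving that $\peval{}{\rho}$ and $\parallel$ are functors; the theorem is then essentially a corollary.

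For the first claim, I would unfold the construction of $\peval{}{\rho}$. It is the functor induced by the effect morphism $L \mapsto tr_{C'}(L(\rho \boxtimes \mathbb{I}))$. By \autoref{thm:morphisnattrans} this yields a natural transformation $\qmon[C] \Rightarrow \qmon[{C \setminus C'}]$, which after whiskering with $\powset(\blank)^L$ lifts to a natural transformation between the endofunctors underlying qLTSs. \autoref{thm:nattransisfunc} then gives the functor $\peval{}{\rho}$ and guarantees that it preserves bisimilarity. Since this functor is the identity on underlying state sets, any cocongruence witnessing $s \sim_k t$ in $(X,c)$ descends directly to a cocongruence witnessing $\peval{s}{\rho} \sim_k \peval{t}{\rho}$ in $\peval{(X,c)}{\rho}$.

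For the second claim, I would rely on the bifunctoriality of $\parallel$ just established. Let $(X,c) \xrightarrow{m_1} (Z,e) \xleftarrow{m_2} (Y,d)$ witness $s \sim_k t$ and $(X',c') \xrightarrow{m_1'} (Z',e') \xleftarrow{m_2'} (Y',d')$ witness $s' \sim_k t'$. Applying $\parallel$ componentwise gives morphisms $m_1 \parallel m_1' = m_1 \times m_1'$ and $m_2 \parallel m_2' = m_2 \times m_2'$, which by bifunctoriality are coalgebra homomorphisms in $\coalgcat[\PL{T_{m+n}}]$. Hence the cospan
\[
(X,c) \parallel (X',c') \xrightarrow{m_1 \times m_1'} (Z,e) \parallel (Z',e') \xleftarrow{m_2 \times m_2'} (Y,d) \parallel (Y',d')
\]
is a cocongruence. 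From $m_1(s) = m_2(t)$ and $m_1'(s') = m_2'(t')$ we deduce $(m_1 \times m_1')(s, s') = (m_2 \times m_2')(t, t')$, so $s \parallel s' \sim_k t \parallel t'$.

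The only subtlety I anticipate is checking that the middle vertex of the parallel cospan genuinely carries a coalgebra structure making both legs homomorphisms, and that this is the coalgebra obtained by applying the parallel operator to $(Z,e)$ and $(Z',e')$. But this is immediate from the fact that $\parallel$ is a bifunctor between coalgebra categories: it sends the pair $((Z,e),(Z',e'))$ to a coalgebra and the pairs $(m_i, m_i')$ to homomorphisms of that coalgebra. No explicit SOS-style calculation is needed; all the combinatorial content is absorbed into the previously established functoriality of $\peval{}{\rho}$ and $\parallel$.
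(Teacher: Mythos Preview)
Your proposal is correct and follows essentially the same approach as the paper: for partial evaluation you invoke the effect-morphism $\Rightarrow$ natural-transformation $\Rightarrow$ functor chain via \autoref{thm:morphisnattrans} and \autoref{thm:nattransisfunc}, and for parallel composition you use bifunctoriality to push the two witnessing cospans to a product cospan and then check that the pair $(s,s'),(t,t')$ lands in its kernel. The paper's sketch is the same, only phrased slightly more tersely (it summarizes the second step as ``the cartesian product of two bisimulations is a bisimulation'').
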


\section{Conclusions}\label{sec:conclude}

We have characterized distributions with weights from a generic effect algebra, subsuming probability and quantum effect distributions.
We have introduced monads graded on a partial commutative monoid (PCM) that allow us to grade quantum distributions over their resources, which must be treated linearly, as prescribed by the no-cloning theorem.
We have studied effect weighted Markov Chains and labelled transition systems ($\ealg$LTS) in a coalgebraic framework, extending previous results about kernel and Aczel-Mendler bisimilarities.
We have applied our findings to the quantum setting, proving that each quantum state $\rho$ defines a functor from qLTSs to pLTSs that “instantiates” a quantum process to the probabilistic behaviour it exhibits when the quantum state is $\rho$.
We have compared the two notions of bisimilarity in hand with the desired properties of a behavioural equivalence for quantum systems, proving that the kernel bisimilarity is the only one of the two that captures theirs observable features, i.e.~the induced probabilistic behaviour.
Finally, we have defined parallel composition of $\ealg$LTSs in a functorial, compositional way, and a partial evaluation operator that given a qLTS instantiates some of its input qubits, paving the way for an $\ealg$LTS semantics of quantum process calculi.

\paragraph*{Future work}
We will study the final coalgebra of our functors, investigating the definition of graded GSOS operators for our graded $\ealg$LTSs, and their relation with the functorial ones defined in this work. 
Our framework allows us to compose effects in parallel via the tensor product.
We will consider also sequential composition, extending our graded approach to the case of superoperators, which model more general quantum operations.
We expect such an extension to come naturally and to preserve our constructions and results.
Finally, we will explore the reflection of the kernel bisimilarity in broader cases, i.e.~when the quantum effect algebra is finitely generated, but possibly infinite, and when it is numerable in general.
Recent work~\cite{effectfulstreams} defines bisimulation for Mealy machines with generic effects.
We intend to look at the relationship between their effectful bisimilarity and our notion of kernel bisimilarity.

\bibliographystyle{eptcs}
\bibliography{quantumVerification}

\clearpage

\appendix
\section{Proof Sketches}

\morphisnattrans*
\begin{proof}[Proof sketch]
	Naturality of \(m \circ \blank\) follows from the fact that \(m\) preserves the (partial) sum and that distributions have finite support.
	If \(m\) is injective, the components of \(m \circ \blank\) are also injective because if two different \( \Delta, \Theta \in \emon[{\ealg[E]}](X)\) differ on \(x \in X\), then \(m \circ \Delta\) and \(m \circ \Theta\) will differ on the same \(x\).
\end{proof}

The evolution of density operators is given as a \emph{trace preserving
	superoperator} $\mathcal{E}: \dm[\hilbert_A] \rightarrow \dm[\hilbert_B]$, a function defined by its \emph{Kraus operator sum decomposition} $\{E_i\}_i$ for
a finite set of indexes $i = 1, \ldots,
	n \times m$, satisfying that
$E_i \in \mathbb{C}^{m \times n}, \E({\rho}) = \sum_i E_i\rho E_i^\dag$ and $\sum_i E_i^\dag E_i = \I_n$, where $n$ and $m$ are the dimension of Hilbert space $\hilbert_A$ and $\hilbert_B$ respectively.
The tensor product of density operators $\rho \otimes \sigma$ is defined as their Kronecker product, and of superoperators $\mathcal{E} \otimes
	\mathcal{F}$ as the superoperator having Kraus decomposition $\{ E_i \otimes
	F_j \}_{i,j}$ with $\{ E_i \}_i$ and $\{ F_j \}_j$ Kraus decompositions of
$\mathcal{E}$ and $\mathcal{F}$.

A morphism between effects is the dual of a superoperator.
Let $\mathcal{E}(\rho) = \sum_i E_i \rho E_i^\dag$ be a superoperator.
Its dual is the superoperator $\mathcal{E}^\dag(L) = \sum_i E_i^\dag L E_i$.

\efMonQuEf*
\begin{proof}[Proof sketch]
	Recall that $Sort_{C,D}$ is the transformation on quantum effects (i.e.\ a dual superoperator) which applies a unitary permutation matrix from $\hilbert_C \otimes \hilbert_D$ to $\hilbert_{C \uplus D}$.
	Unitality follows from the definition of Kronecker product and from $Sort_{C, \emptyset}$ being the identity transformation.
	For commutativity, we check that $Swap_{C,D}$, the superoperator that permutes $\hilbert_C \otimes \hilbert_D$ in $\hilbert_D \otimes \hilbert_C$, respects sorting:
	\[
		Sort_{D, C} (L_2 \otimes L_1) = Sort_{D, C} (Swap_{C, D} (L_1 \otimes L_2)) = (Sort_{D, C} \circ Swap_{C, D}) (L_1 \otimes L_2) = Sort_{C, D} (L_1 \otimes L_2).
	\]
	%
	%
	%
	For associativity, we have that
	\[
		Sort_{C, D \uplus E} (L_1 \otimes Sort_{D,  E} (L_2 \otimes L_3)) = (Sort_{C, D \uplus E} \circ (Id \otimes Sort_{D,  E})) (L_1 \otimes L_2 \otimes L_3) = Sort_{C, D, E} (L_1\otimes L_2 \otimes L_3)
	\]
	and similarly for $Sort_{C \uplus D, E} (Sort_{C, D} (L_1 \otimes L_2) \otimes L_3)$, where $Id$ is the identity superoperator, and $Sort_{C, D, E}$ is the superoperator going from $\hilbert_C \otimes \hilbert_D \otimes \hilbert_E$ to $\hilbert_{C \uplus D \uplus E}$.
\end{proof}

\gmofgem*
\begin{proof}[Proof sketch]
	The proof proceeds as in the total case: the unit and multiplication of the monad are defined with the graded monoid structure of the weights, and the associativity and unitality conditions follow from the graded monoid ones.
\end{proof}

\commute*
\begin{proof}[Proof sketch]
	As any $\setcat$-endofunctor, $\emon[\ealg_m]$ has a canonical left-strength $\sigma_{m, X, Y}: X \times \emon[\ealg_m]Y \to \emon[\ealg_m](X \times Y)$ given by
	\[(x, (\sum_i e_i \bullet y_i)) \mapsto \sum_i e_i \bullet x,y_i\]
	and a corresponding right strength given by the monoidal structure of $\setcat$. Both transformations yield indeed a strong graded monad.
	It's easy to see that the morphism $\mu_{m,n} \circ \emon[\ealg_m]\sigma_n \circ \tau_m$ brings $(\sum_i e_i \bullet x_i,  \sum_j e_j' \bullet y_j)$
	into $\sum_{i, j} \nabla_{n, m}(e'_j, e_i) \bullet (x, y)$, while $\mu_{n,m} \circ \emon[\ealg_n]\tau_m \circ \sigma_n$ brings it to $\sum_{i, j} \nabla_{m, n}(e_i, e'_j) \bullet (x, y)$. The two coincides whenever $\nabla$ is commutative.
\end{proof}

\amkequality*
\begin{proof}[Proof sketch]
	The first point has been demonstrated in \cite[Corollary 8]{bartelsHierarchyProbabilisticSystem2004}.

	The \emph{if} case of the second point follows from the composition of two previously demonstrated results.
	In \cite[Theorem 5.13]{ruttenUniversalCoalgebra2000}, the authors prove that if a commutative monoid is positive and decomposable then the functor $\emon$ weakly preserves weak pullbacks.
	Such demonstration is also applicable to partial commutative monoid, of which effect algebras are a special case.
	Again, \cite[Corollary 8]{bartelsHierarchyProbabilisticSystem2004} prove that if $\emon$ preserves weak pullback then $\sim_k$ implies $\sim_{AM}$.

	Finally, the \emph{only if} case is demonstrated through contrapositivity.
	Assume we have effects $a,b,c,d \in \ealg$ that are not decomposable, i.e.\ $a + b = c + d$ and there are no $e_{11},e_{12},e_{21},e_{22} \in \ealg$ such that
	$a = e_{11} + e_{12}$, $b = e_{21} + e_{22}$, $c = e_{11} + e_{21}$ and $d = e_{12} + e_{22}$. Let $s = a + b = c + d$.

	Consider three coalgebras:
	\begin{enumerate}
		\item $(X, c)$ with $X = \{x_1,x_2,x_3\}$ and $c(x_1) = \{x_2 \mapsto a, x_3 \mapsto b\}$.
		\item $(Y, d)$ with $Y = \{y_1,y_2,y_3\}$ and $d(y_1) = \{y_2 \mapsto c, y_3 \mapsto d\}$.
		\item $(Z, z)$ with $Z = \{z_1,z_2\}$ and $z(z_1) = \{z_2 \mapsto s\}$.
	\end{enumerate}
	It is straightforward to show that $X \xrightarrow{m_1} Z \xleftarrow{m_2} X$ with $m_1(x_1) = m_2(y_1) = z_1$ and $m_1(x_2) = m_1(x_3) = m_2(y_1) = m_2(y_2) = z_2$, $m_1(x_4) = m_2(x_4) = z_3$ is a cocongruence, therefore $x_1 \sim_k y_1$.
	By~\cite[Lemma 5.5]{ruttenUniversalCoalgebra2000}, a relation $R \subseteq X \times Y$ is an AM-bisimulation if and only if for every $(a, b) \in R$ there is a $|X|\times|Y|$ matrix, $(m_{x,y})$, with entries from $\ealg$ such that
	\begin{itemize}
		\item there are all but finitely many $m_{x,y} = 0$;
		\item if $m_{x,y} \neq 0$ then $(x, y) \in R$;
		\item $\forall x \in X \ldotp c(a)(x) = \sum_{y \in Y} m_{x,y}$;
		\item $\forall y \in Y \ldotp d(b)(y) = \sum_{x \in X} m_{x,y}$;
	\end{itemize}
	Let us consider the case for the pair $(x_1, y_1)$.
	The requirements can be represented by the following table.
	\begin{center}
		\begin{tabular}{c c c|l}
			$0$ & $0$           & $0$           & $0$ \\
			$0$ & $m_{x_2,y_2}$ & $m_{x_2,y_3}$ & $a$ \\
			$0$ & $m_{x_3,y_2}$ & $m_{x_3,y_3}$ & $b$ \\
			\midrule
			$0$ & $c$           & $d$           &
		\end{tabular}
	\end{center}
	However, by assumption of non-decomposability there are no $m_{x_2,y_2}$, $m_{x_2,y_3}$, $m_{x_3,y_2}$, $m_{x_3,y_3}$ that can satisfy such requirements, hence $x_1 \nsim_{AM} y_1$.
\end{proof}

\alphainducef*
\begin{proof}[Proof sketch]
	Let $X \xrightarrow{m_1} Z \xleftarrow{m_2} Y$ be a cocongruence between any two coalgebras $X \xrightarrow{\alpha_X \circ c} GX$ and $Y \xrightarrow{\alpha_Y \circ d} GY$.
	Let $R$ be the pullback in $\setcat$ of such cospan, witnessing the bisimilarity between the two $G$-coalgebras.
	\cite[Theorem 5]{bartelsHierarchyProbabilisticSystem2004} prove that the same cocongruence is also witness for the kernel bisimulation between the coalgebras $X \xrightarrow{c} FX$ and $Y \xrightarrow{d} FY$.
	Therefore, $R$ is also witness of the bisimilarity between the two $F$-coalgebras.
\end{proof}

\efnotdec*
\begin{proof}[Proof sketch]
	Take the following equality $\ketbra{0} + \ketbra{1} = \ketbra{+} + \ketbra{-}$.
	By \cite[Proposition 1.63]{heinosaariMathematicalLanguageQuantum2011}, if $R$ is a positive rank-$1$ operator, $T$ a positive operator and $T \sqsubseteq R$, then $T = pR$ for some $p \in [0,1]$.
	Recall that the partial order of quantum effects is the L\"owner order, and that the considered effects are positive rank-$1$ operator.
	Let $\ketbra{0} = e_{11} + e_{12}$ for some $e_{11},e_{12} \in \hilbert_C$.
	It must be that $e_{11} = p_{11}\ketbra{0}$ for some $p_{11} \in [0,1]$, and similarly $e_{12} = p_{12}\ketbra{0}$, with $p_{11} + p_{12} = 1$.
	But in order to satisfy the decomposability requirement, $\ketbra{+} = e_{11} + e_{21}$ for some $e_{21} \in \hilbert_C$.
	Then, $e_{11} = p'_{11}\ketbra{+}$ for some $p_{11}' \in [0,1]$.
	Thus, $p_{11} \ketbra{0} = p'_{11} \ketbra{+}$, which holds if and only if $p_{11} = p'_{11} = 0$.
	Assume without loss of generality that $e_{11} = 0$.
	The same line of reasoning can show that $p_{12}\ketbra{0} = p'_{12}\ketbra{-}$ holds if and only if $p_{12} = p'_{12} = 0$.
	Hence, both $e_{11}$ and $e_{12}$ must be $0$, thus it is impossible to decompose $\ketbra{0} + \ketbra{1} = \ketbra{+} + \ketbra{-}$. 
	This construction generalizes easily to dimensions grater than two.\qedhere
\end{proof}

\quantumkernelreflected*
\begin{proof}[Proof sketch]
	Assume $x$ and $y$ are kernel bisimilar in $\qtop(X, c)$ and $\qtop(Y, d)$ for all $\rho$.
	By~\autoref{thm:injmorph} a $\widehat{\rho}$ such that the effect algebra homomorphism $L \mapsto tr(L\widehat{\rho})$ is injective.
	Then, the components of the natural transformation induced by this homomorphism are injective~\autoref{thm:morphisnattrans}.
	By assumption, $x$ and $y$ are bisimilar also in $\qtop[\widehat{\rho}](X, c)$ and $\qtop[\widehat{\rho}](Y, d)$.
	The result then follows by~\autoref{thm:nattransisfunc}, as $\downarrow_{\widehat{\rho}}$ reflects kernel bisimilarity.
\end{proof}

\ismorphism*
\begin{proof}[Proof sketch]
	Assume $x$ and $y$ are kernel bisimilar in $(X, c)$ and $(Y, d)$, then the injective effect algebra homomorphism $L \mapsto \lambda \rho. tr(L \rho)$ defines a natural transformation $\alpha_{lpp}: Q_c \to \emon[f_C]$ with injective components~\autoref{thm:injmorph}.
	By~\autoref{thm:nattransisfunc}, $\alpha_{lpp} \circ \blank$ is a functor from $\coalgcat[Q_C]$ to $\coalgcat[f_C]$: it preserves both bisimilarities as every functor, moreover, it also reflects kernel bisimilarity since $\alpha_{lpp}: Q_c \to \emon[f_C]$ has injective components.
\end{proof}

\effectgraded*
\begin{proof}[Proof sketch]
	The desired conditions follow from the unitality and associativity of $\nabla$, and from the fact that $\nabla_{m,n} (1,1) = 1 $, since it is an effect bihomomorphism.
\end{proof}

\extqmon*
\begin{proof}[Proof sketch]
	$Sys \setminus C_1 = C_2$ is the unique element in $\mathcal{P}(Sys)$ such that $C_1 \uplus C_2 = Sys = Sys \setminus \emptyset$ by definition.
	Finally, by construction, $C \uplus Sys$ is defined only if $C \cap Sys = \emptyset$, i.e. $C$ must be $\emptyset$.
\end{proof}

\begin{restatable}{lemma}{transpar}\label{lem:transPar}
	Given two coalgebra homomorphisms $f: c \to c', g: d \to d'$, the synchronization operator $\blank | \blank$ makes the following diagram commute
	\[
		\begin{tikzcd}[ampersand replacement=\&]
			X \times Y \arrow[d, "c | d"] \arrow[rr, "f \times g"]                                 \&  \& X' \times Y' \arrow[d, "c' | d'"]     \\
			{\PL{T_m X \times T_n Y}} \arrow[rr, "{\PL{T_m f \times T_n g}}"'] \& \& {\PL{T_m X' \times T_n Y'}}
		\end{tikzcd}
	\]
\end{restatable}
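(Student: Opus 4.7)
The plan is to verify that the diagram commutes pointwise. Fix $(s,t) \in X \times Y$ and a label $\mu \in L$; I want to show that the set $\PL{T_m f \times T_n g}\big((c|d)(s,t)\big)(\mu)$ equals $(c'|d')(f(s), g(t))(\mu)$. Since $(c|d)(s,t)(\mu)$ is defined as a union of the outputs of the three SOS rules for the synchronization operator, I would proceed by case analysis on which rule produced each transition, verifying both inclusions rule by rule.

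For the inclusion from left to right, consider first a transition arising from the left-interleaving rule, of the form $\langle \Delta,\ \xi_{0 \preceq n}(\eta(t)) \rangle$ with $\Delta \in c(s)(\mu)$. Its image under $T_m f \times T_n g$ is $\langle T_m f(\Delta),\ T_n g(\xi_{0 \preceq n}(\eta(t))) \rangle$. Since $f$ is a coalgebra homomorphism, $T_m f(\Delta) \in c'(f(s))(\mu)$; and using naturality of $\eta$ and of the extension $\xi_{0 \preceq n}$, the second component satisfies
\[
T_n g \circ \xi_{0 \preceq n} \circ \eta \;=\; \xi_{0 \preceq n} \circ T_0 g \circ \eta \;=\; \xi_{0 \preceq n} \circ \eta \circ g,
\]
so it equals the idle witness $\xi_{0 \preceq n}(\eta(g(t)))$ produced by the left-interleaving rule for $(c' | d')$ at $(f(s), g(t))$. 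The right-interleaving case is completely symmetric, and the CCS-style synchronization rule only requires applying the homomorphism property of $f$ and of $g$ to the two witnesses in parallel.

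For the reverse inclusion, the key observation is that $f$ being a coalgebra homomorphism yields the \emph{equality} $c'(f(s))(\mu) = \{T_m f(\Delta) \mid \Delta \in c(s)(\mu)\}$, and analogously for $g$; thus every transition that $(c'|d')$ fires at $(f(s), g(t))$ arises as the image of some transition fired by $(c|d)$ at $(s,t)$, obtained by lifting either one or both witnesses back along $f$ or $g$. The only step that requires any care is the naturality juggling for the idle witness $\xi_{0 \preceq n}(\eta(\cdot))$ in the two interleaving rules; beyond that the argument is a direct unfolding of the definition of $c|d$, and I do not expect any substantive obstacle.
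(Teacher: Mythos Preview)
Your proposal is correct and follows essentially the same route as the paper: unfold the commutativity of the square into the two inclusions between the transition sets of $(c|d)(s,t)$ and $(c'|d')(f(s),g(t))$, then prove each inclusion by a case analysis on the three SOS rules, using that $f$ and $g$ are coalgebra homomorphisms (i.e.\ $c'(f(s))(\mu) = \{T_m f(\Delta) \mid \Delta \in c(s)(\mu)\}$ and similarly for $g$). Your version is a bit more explicit than the paper's sketch in spelling out the naturality of $\eta$ and of $\xi_{0 \preceq n}$ needed to push $T_n g$ through the idle witness, which is exactly the small bookkeeping step the paper leaves implicit.
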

\begin{proof}[Proof sketch]
	The commutativity of the above diagram correspond to verifying that
	\begin{itemize}
		\item For all $s \parallel t \xrightarrow{\mu} \langle \Delta, \Theta \rangle$, there exists $\Delta', \Theta'$ such that $f(s) \parallel g(t) \xrightarrow{\mu} \langle \Delta' , \Theta' \rangle$ and $\Delta' = f(\Delta), \Theta' = g(\Theta)$
		\item For all $f(s) \parallel g(t) \xrightarrow{\mu} \langle \Delta' , \Theta' \rangle$, there exists $\Delta, \Theta$ such that $s \parallel t \xrightarrow{\mu} \langle \Delta, \Theta \rangle$ and $\Delta' = f(\Delta), \Theta' = g(\Theta)$
	\end{itemize}
	which we can prove by cases on the definition on $\blank | \blank$, thanks to the hypothesis on $f$, which correspond to
	\begin{itemize}
		\item For all $s \xrightarrow{\mu} \Delta$, there exists $\Delta'$ such that $f(s) \xrightarrow{\mu} \Delta' $ and $\Delta' = f(\Delta)$
		\item For all $f(s) \xrightarrow{\mu} \Delta' $, there exists $\Delta$ such that $s \xrightarrow{\mu} \Delta $ and $\Delta' = f(\Delta)$
	\end{itemize}
	and similarly for $g$.
\end{proof}

\parallelfunctor*
\begin{proof}[Proof sketch]
	Checking that $\parallel$ is a functor amounts to checking the commutativity of the following diagram
	\[
		\begin{tikzcd}
			X \times Y \arrow[d, "c | d"] \arrow[rr, "f \times g"]                                                     &  & X' \times Y' \arrow[d, "c' | d'"]                         \\
			{\PL{T_m X \times T_n Y}} \arrow[rr, "{\PL{T_m f \times T_n g}}"'] \arrow[d, "\PL{\alpha}"] &  & {\PL{T_m X' \times T_n Y'}} \arrow[d, "\PL{\alpha}"] \\
			{\PL{T_{m+n} (X \times Y)}} \arrow[rr, "{\PL{T_{m+n} (f \times g)}}"']                 &  & {\PL{T_{m+n} (X' \times Y')}}
		\end{tikzcd}
	\]
	which holds since the bottom square commutes by naturality of $\alpha$, and the top squares commute by \autoref{lem:transPar}.
\end{proof}

\congruence*
\begin{proof}[Proof sketch]
	For the first, notice that $L \mapsto \peval{L}{\rho}$ is an effect morphism thanks to the linearity of the partial trace.
	Thus, it yields a natural transformation and a functor by~\autoref{thm:morphisnattrans} and \autoref{thm:nattransisfunc}, and functors preserve bisimilarity.
	For the second, thanks to functoriality, we know that $\parallel$ maps cospans in cospans:
	\begin{equation*}
		\left(
		\begin{tikzcd}
			{(X, c)} \arrow[d, "f"] & {(Y, d)} \arrow[ld, "g"] \\
			{(Z, z)}
		\end{tikzcd}
		\quad\bigparallel\quad
		\begin{tikzcd}
			{(X', c')} \arrow[d, "f'"] & {(Y', d')} \arrow[ld, "g'"] \\
			{(Z', z')}
		\end{tikzcd}
		\right)
		=
		\begin{tikzcd}
			{(X, c) \parallel (X', c') } \arrow[d, "f \times g"] & {(Y, d) \parallel (Y', d')} \arrow[ld, "f' \times g'"] \\
			{(Z, z) \parallel (Z', z')}
		\end{tikzcd}
	\end{equation*}
	Then, it is easy to see that the $\setcat$-pullback of $X \times X' \rightarrow  Z \times Z' \leftarrow Y \times Y'$ contains all and only the couples $s\parallel s', t \parallel t'$ such that $f(s) = g(t)$ and $f(s') = g(t')$. In other words, the cartesian product of two bisimulations is a bisimulation.
\end{proof}

\end{document}